\newtheorem{thm}{Theorem}
\newtheorem{cor}{Corollary}[thm]
\newcommand{\figureWidthScheme}{3.4in}
\newcommand{\figureWidthFigure}{3.4in}
\newcommand{\figureWidthFlowchart}{7cm}
\newcommand{\R}{\mathbb{R}}
\newcommand{\Z}{\mathbb{Z}}
\newcommand{\C}{\mathbb{C}}
\begin{document}

\title{Orthogonal Design of Cyclic Block Filtered Multitone Modulation}
%
%
% author names and IEEE memberships
% note positions of commas and nonbreaking spaces ( ~ ) LaTeX will not break
% a structure at a ~ so this keeps an author's name from being broken across
% two lines.
% use \thanks{} to gain access to the first footnote area
% a separate \thanks must be used for each paragraph as LaTeX2e's \thanks
% was not built to handle multiple paragraphs
%

\author{Mauro Girotto,~\IEEEmembership{Member,~IEEE,} and  
        Andrea M. Tonello,~\IEEEmembership{Senior Member,~IEEE}\thanks{A version of this manuscript has been submitted to the IEEE Transactions on Communications for possible publication.\newline
        M. Girotto is with the University of Udine, Udine 33100, Italy (e-mail: mauro.girotto@uniud.it).
       A. Tonello is with the University of Klagenfurt, Klagenfurt 9020, Austria (e-mail: andrea.tonello@aau.at).}}% <-this % stops a space

% make the title area
\maketitle

\begin{abstract}
The orthogonal design of a Cyclic Block Filtered Multititone Modulation (CB-FMT) system is addressed. CB-FMT is a filter bank modulation scheme that uses frequency confined prototype pulses, similarly to Filtered Multitone Modulation (FMT). Differently from FMT, where the linear convolution is used, the cyclic convolution is exploited in CB-FMT. This allows to efficiently implement the system via a concatenation of discrete Fourier transforms (DFT). The necessary and sufficient orthogonality conditions are derived in time domain and frequency domain. Then, these conditions are expressed in matrix form and the prototype pulse coefficients are parameterized with hyper-spherical coordinates. The effect of a linear time-variant transmission medium is discussed. In such a scenario, the optimal filter bank orthogonal design is considered with the objective of maximizing either the in-band-to-out-band sub-channel energy ratio or the achievable rate. Numerical results and comparisons show the performance improvements attainable with several designed optimal pulses also w.r.t. the use of the baseline root-raised-cosine pulse.
\end{abstract}

% Note that keywords are not normally used for peerreview papers.
\begin{IEEEkeywords}
Cyclic block FMT, filter bank modulation, OFDM, orthogonal filter bank, pulse design, linear time-variant channels.
\end{IEEEkeywords}

\IEEEpeerreviewmaketitle

% *****************************************************************

% INTRODUCTION

% *****************************************************************
\section{Introduction}
\label{sec:Introduction}
\IEEEPARstart{T}{he large} demand for broadband telecommunications has pushed the research and development of advanced physical layer techniques based on multi-carrier (MC) modulation also known as Filter Bank Modulation (FBM). The idea behind FBM is to partition the wide band frequency selective channel in a number of narrowband sub-channels where the parallel and simultaneous transmission of low data-rate signals is performed. In general, FBM use spectrally well-shaped prototype filters. In this way, the sub-channel frequency responses become nearly flat if the number of sub-channels is sufficiently high. This greatly simplifies the equalization task. In addition, FBM allows to flexibly manage the spectrum occupancy by switching on/off the sub-channels transmission and allocating the available resources through power and bit loading algorithms.    
The most popular FBM scheme is Orthogonal Frequency Division Multiplexing (OFDM) \cite{ref:Bingham}. It has been adopted in several standards, e.g., in the Wireless LAN IEEE 802.11 \cite{WLAN_MAC}, in the wireless MAN IEEE 802.16 \cite{WMAN} and in the new cellular LTE \cite{LTE_com_mag} standards. OFDM is also used in ADSL \cite{ADSL} and it is at the base of both broadband and narrowband power line communication systems \cite{ref:PLC_Book}. OFDM can be viewed as a FBM scheme where the prototype pulse has a rectangular pulse shape in time domain. This pulse shape enables a simple and efficient implementation based on Discrete Fourier Transform (DFT). Despite its simplicity, the poor sub-channel frequency confinement (sinc like) renders OFDM weak in the presence of non-idealities such as channel time-variations introduced by nodes mobility \cite{OFDM_tv_Cimini} and asynchronism between users \cite{ref:Tonello_BLTJ}. The paradigm followed in more general FBM schemes is to deploy frequency confined sub-channel pulses. These architectures are well represented by Filtered Multitone Modulation (FMT) \cite{ref:Cherubini} which synthesizes the transmitter with an exponentially modulated filter bank (FB) with a prototype pulse designed to have high sub-channel frequency confinement. FMT can be implemented with a DFT and polyphase filtering (DFT filter bank) (see \cite{Moret_JWC_Efficient_implementation} and references therein). Typically, long pulses are required to realize good frequency confinement as for instance reported in \cite{Pulse1, Pulse2, Pulse3, Pulse4}. In such design examples, the FB does not have the perfect reconstruction (or orthogonality) property. The realization of orthogonal DFT filter banks has been discussed in \cite{Moret_JWC_Efficient_implementation}, \cite{ORT1}, \cite{Siclet},  where it has been shown to be a complex task. 

In this paper, a different FBM scheme is considered. It is referred to as Cyclic Block Filtered Multitone Modulation (CB-FMT) \cite{CBFMT_ICC}. Similarly to the idea in conventional FMT, the prototype pulse is designed to have high frequency confinement. However, a key difference concerns the filtering operation in the FB: FMT uses the linear convolution while CB-FMT uses the cyclic convolution. This turns the data transmission into a block transmission and the efficient implementation is possible in the frequency domain (FD) via the concatenation of DFTs. This renders the complexity of CB-FMT significantly lower than conventional FMT with equal prototype pulse length \cite{CBFMT_EURASIP}. The orthogonal FB design can be done in the FD as shown in the preliminary results presented in \cite{CBFMT_EW2014}. Also equalization can be implemented in the FD and it has the potentiality of offering better bit-error-rate performance than OFDM in typical wireless fading channels \cite{CBFMT_EURASIP}.

Herein, the design of an orthogonal CB-FMT system is analyzed. A preliminary assessment of this specific problem was reported in \cite{CBFMT_EURASIP,CBFMT_EW2014}. More in detail, in \cite{CBFMT_ICC,CBFMT_EURASIP} the orthogonality conditions were introduced and a very simple design was proposed based on sampling a root-raised-cosine (RRC) pulse spectrum. In \cite{CBFMT_EW2014} the idea of parameterizing the orthogonal relations with angles was introduced. The novel contributions of this paper can be summarized as follows:
\begin{itemize}
\item The orthogonality conditions are analyzed following a rigorous mathematical approach. It is shown that the cyclic convolution allows to write the orthogonality conditions in a simple matrix form. The properties of these matrices are analyzed to enable the constructive design of an orthogonal CB-FMT system.
\item Prototype pulse coefficients are parameterized with hyper-spherical coordinates following the approach introduced in \cite{CBFMT_EW2014}. It is shown that the orthogonality conditions can be reduced into a set of non-linear systems.
\item The effects on orthogonality introduced by a transmission medium that is time variant and dispersive is analyzed. 
\item The optimal filter bank orthogonal design is considered with the objective of maximizing either the in-band-to-out-band sub-channel energy ratio or the achievable rate. Several pulses are then obtained and the system performance is reported. 
\item Starting from an optimal orthogonal mother pulse (designed for a certain set of parameters), a simple method to obtain an orthogonal new pulse in the presence of a parameter variation, e.g., number of sub-channels or pulse length, is reported. This allows to avoid the search of a new pulse.
\end{itemize}

The paper is organized as follows. In Sec.~\ref{Sec:CBFMT}, we briefly recall the CB-FMT principles and the perfect reconstruction conditions are reported in time and frequency domains. In Sec.~\ref{Sec:Orthogonality}, the orthogonality conditions are introduced and they are written in matrix form. The reuse of an orthogonal pulse for a different set of parameters is also considered. In Sec.~\ref{Sec:Practical} the orthogonality problem when transmission takes place in a linear time-variant medium, e.g, in a mobile radio channel, is considered. In Sec.~\ref{Sec:Design}, we introduce a parameterization of the matrix elements in terms of non-linear combination of trigonometric functions and we introduce the objective functions to be maximized in the optimal pulse search. In Sec.~\ref{Sec:Results}, several numerical results are provided and a performance comparison w.r.t. the baseline solution is shown. The conclusions then follow.
For reading fluency, most of the theorem proofs are reported in the appendices. 

\begin{figure}[t]
\centering
\includegraphics[width=\figureWidthScheme]{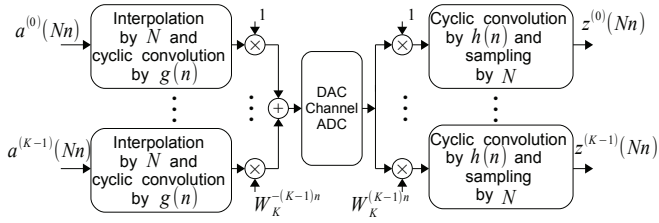}
   \caption{Schematic representation of the CB-FMT transceiver.}
   \label{fig:CBFMT-baseline}
\end{figure}
% *****************************************************************

% CB-FMT

% *****************************************************************
\section{Cyclic Block FMT Modulation}
\label{Sec:CBFMT}
Cyclic Block Filtered Multitone Modulation is a FBM scheme whose schematic representation is shown in Fig. \ref{fig:CBFMT-baseline} \cite{CBFMT_EURASIP}. The useful signals, constants and operators are listed in Table I.  

CB-FMT splits the transmission of a broad-band information signal into $K$ parallel narrow-band uniformly spaced signals with confined spectrum. In detail, the information data signal, $a(\ell T), \, \ell \in \Z$, (with alphabet belonging to the QAM data set) is serial-to-parallel converted to obtain the low data-data-rate signals $a^{(k)}(\ell NT), \, k \in \{0,\dots,K-1\}$, where $T$ is the sampling period. In the following, a normalized sampling period is assumed, i.e., $T=1$. Differently from conventional FMT, where transmission takes place continuously, in CB-FMT the low data-rate streams are grouped in blocks of $L$ data symbols. The system processes $KL$ data symbols in $MT$ seconds, where $M=LN$. Each symbol in the sub-channel block is interpolated by a factor $N$ and, then, cyclically convolved with a baseband prototype pulse $g(n)$. A multiplication with a complex exponential is performed to translate in frequency the sub-channel signals. Finally, all the sub-channel signals are summed together to yield the transmitted signal $x(n)$:
\begin{align}
	x(n) &= \sum_{k=0}^{K-1} \left[ a^{(k)} \otimes g \right](n) \notag \\	
	&= \sum_{k=0}^{K-1}\sum_{\ell=0}^{L-1}a^{(k)}(\ell N)g((n-\ell N)_{\scriptscriptstyle M}) W_K^{-nk}, \label{eq:CBFMT_TX}\\ 
	& n\in\{0,\cdots,M-1\},\notag
\end{align}
where $\otimes$ denotes the cyclic convolution operator\footnote{The circular convolution between two periodic signals (with $M$ period) $x(n)$ and $y(n)$ is defined as $x \otimes y (n) = \sum_{m=0}^{M-1} x(m) y((n-m)_M), n \in \left[0,\dots,M-1 \right]$.}, $g((n)_{\scriptscriptstyle M})$ is the prototype pulse periodic repetition, i.e., $g((n+aM)_{\scriptscriptstyle M})=g((n)_{\scriptscriptstyle M}), a \in \Z$ and $W_K^{-nk}=e^{j 2 \pi n k/K}$. The prototype pulse is a causal finite impulse response (FIR) filter with $M$ coefficients. If the coefficients number is less than $M$, the pulse can be extended to $M$ with zero-padding, without loss of generality. The modulator processes a block of $KL$ symbols to yield the signal in \eqref{eq:CBFMT_TX}. When more blocks of symbols are transmitted, the overall transmitted signal is obtained with the concatenation of the transmitted blocks of samples $x(n)$.

\renewcommand{\arraystretch}{1}
\begin{table}[t]
  \caption{Useful signals, constants and operators.}
  \label{tab:notation}
  \centering
  \begin{tabular}{|lc|}
  \hline
  \multicolumn{2}{ |c| }{Signals, constants and operators} \\ \hline \hline
  $K$ & number of sub-channels\\
  $N$ & sampling-interpolation factor\\
  $M$ & prototype pulse length\\
  $L=M/N$ & block size\\
  $Q=M/K$ & number of FD coefficients for each sub-channel\\
  $g(n)$ &  synthesis bank prototype filter\\
  $h(n)$ &  analysis bank prototype filter\\
  $g^{(k)}(n)$ &  $g(n) W_K^{-nk}$\\
  $h^{(k)}(n)$ &  $h(n) W_K^{-nk}$\\
  $W_K^{-n}$ & $\exp \left( i 2 \pi n /K \right)$\\
  $G(i)$ & DFT of the synthesis filter bank prototype filter\\
  $H(i)$ &  DFT of the analysis filter bank prototype filter\\
  $(A)_B$ & modulo operation $\left[ A-\text{floor}\left(A/B \right) B \right]$\\
  $\left\{ \mathbf{v} \right\}_i$ & $i$-th element of the vector $\mathbf{v}$\\
  $\tau^a \left\{ \mathbf{v} \right\}$ &  vector cyclic shift $\left( \left\{ \tau^a \left\{ \mathbf{v} \right\} \right\}_i = \left\{ \mathbf{v} \right\}_{(i+a)_N} \right)$\\
  $\left\{ \mathbf{A} \right\}_{i,j}$ & element at the $i$-th row and $j$-th column of the matrix $\mathbf{A}$\\
  $\left\{ \mathbf{A} \right\}_{*,j}$ & $j$-th column of the matrix $\mathbf{A}$\\
  \hline  
  \end{tabular}
\end{table}

The receiver comprises a cyclic analysis filter bank. Thus, the received signal $y(n)$ is multiplied with a bank of $K$ complex exponential functions. The multiplier outputs are cyclically filtered with the prototype analysis pulse $h(n)$. Then, the filter outputs are sampled by a factor $N$ to yield the receiver FB output signals. The $m$-th sample of the $i$-th sub-channel signal can be written as
\begin{align}
	z^{(i)}(mN) &= \sum_{\ell=0}^{M-1} y(\ell)W_K^{\ell i} h((mN - \ell)_{\scriptscriptstyle M}), \label{eq:CBFMT_RX} \\
	i & \in \{0,\dots,K-1\}, \quad m  \in \{0,\dots,L-1\}, \notag
\end{align}  
where $h((n)_{\scriptscriptstyle M})$ is the prototype pulse periodic repetition. To detect the transmitted data symbol, the sub-channel signals in \eqref{eq:CBFMT_RX} are processed with a decision element, e.g., a 1-tap (single coefficient) equalizer.

If the transmission is over a dispersive channel, the signal in \eqref{eq:CBFMT_TX} can be extended with a cyclic prefix (CP). If the CP length is greater than the channel response duration, the linear convolution between the transmitted signal and the channel becomes, locally, a cyclic convolution. This can be exploited to perform a simple frequency domain channel equalization as shown in \cite{CBFMT_EURASIP} and in Section \ref{Sec:Practical}. In general, assuming a CP of $\mu$ samples, the transmission data rate is equal to
\begin{equation}
R = \frac{KL}{(M+\mu)T} \quad \mbox{symbols/s}. \label{eq:CBFMT_RATE}
\end{equation}

% *****************************************************************

% ORTHOGONALITY CONDITIONS

% *****************************************************************
\subsection{Perfect Reconstruction Conditions}
\label{Sec:CBFMT:PR}
%Herein, the necessary and sufficient conditions to realize a perfect reconstruction (PR) CB-FMT system are derived. They are useful to design a PR cyclic FB. 
In this section, we will discuss the design of a perfect reconstruction (PR) CB-FMT system, i.e., the perfect reconstruction of the cyclic FB. 
Herein, the communication medium is assumed ideal so that $y(n)=x(n)$. A real transmission medium is considered in Sec. \ref{Sec:Practical}. If the PR conditions are fulfilled, neither inter-channel interference (ICI) nor inter-symbol interference (ISI) will be exhibited at the analysis filter bank output. 

By replacing \eqref{eq:CBFMT_TX} in \eqref{eq:CBFMT_RX}, after some algebraic manipulation, we obtain
% reported in Appx. \ref{Appx:InputOutputTD}
\begin{equation}
 \tilde{z}^{(i)}(m N) = \sum_{k=0}^{K-1} \sum_{\ell=0}^{L-1} \tilde{a}^{(k)}(\ell N) [g^{(k)} \otimes h^{(i)}](mN-\ell N), \label{eq:CBFMT_IN_OUT}
 \end{equation}
where 
\begin{align}
\tilde{a}^{(k)}(\ell N) =& a^{(k)}(\ell N) W_K^{-\ell N k},\label{eq:atilde}\\
\tilde{z}^{(i)}(m N) =& z^{(i)}(m N) W_K^{m N i},\label{eq:ztilde}\\ 
g^{(k)}(n) =& g\left((n)_{\scriptscriptstyle M}\right) W_K^{-nk}, \label{eq:gk2} \\ 
h^{(i)}(n) =& h\left((n)_{\scriptscriptstyle M}\right) W_K^{-ni}. \label{eq:hh2}
\end{align}
Equations \eqref{eq:atilde}--\eqref{eq:ztilde} represent a rotation of the symbol constellation, while \eqref{eq:gk2}--\eqref{eq:hh2} represent a translation of the prototype pulse in the frequency domain. In \eqref{eq:CBFMT_IN_OUT}, the cyclic convolution between $g^{(k)}(n)$ and $h^{(i)}(n)$ corresponds to the cyclic cross-correlation between $g^{(k)}(n)$ and $\left( h^{(i)}(n) \right)^*$, sampled by a factor $N$. Relation \eqref{eq:CBFMT_IN_OUT} can be rewritten as
\begin{align}
 \tilde{z}^{(i)}(m N) =& \sum_{k=0}^{K-1} \sum_{\ell=0}^{L-1} \tilde{a}^{(k)}(\ell N) r^{(k,i)}(mN - \ell N), \label{eq:CBFMT_IN_OUT_CORRELATION}\\
 r^{(k,i)}(mN) = & g^{(k)} \otimes h^{(i)} (mN), \label{eq:rgh_TD}
 \end{align}
where \eqref{eq:rgh_TD} represents the cyclic cross-convolution between the synthesis and the analysis sub-channel pulses. In the following, it will be referred to as cyclic cross-convolution function (CCF). When $k=i$ it will be referred to as cyclic auto-convolution function (ACF). The CCF is a periodic function with period $M$. 

CB-FMT has PR if and only if the Generalized Nyquist Criterion (GNC) \cite{generalized_nyquist} is satisfied:
\begin{enumerate}
\item For $k=i$, the ACF is a Kronecker delta, i.e., $r^{(i,i)}(mN)=\delta_m$. Thus, for each sub-channel there is no interference between different symbols in the same block (no ISI condition).
\item For $k \neq i$, the CCF is always null. Thus, there is no interference between different sub-channels (no ICI condition).
\end{enumerate}

In the following, it will be shown that the pulse design is simplified if we operate in the frequency domain. Therefore, it is important to state the PR conditions in the frequency domain. To start, the DFT of the CCF in \eqref{eq:rgh_TD} is computed to obtain 
% reported in Appx. \ref{Appx:InputOutputFD}
\begin{equation}
R^{(k,i)}(p) = \frac{1}{N} \sum_{s=0}^{N-1} G(p+sL+kQ) H(p+sL+iQ), \label{eq:rg_FD_4}
\end{equation}
where $Q=M/K$ is a positive integer number while $G(p)$ and $H(p)$ are the $M$-point DFT of the transmitter and receiver prototype pulses.

For the CB-FMT system, the frequency domain translation of the GNC can be written as follows.
\begin{thm}[\bfseries CB-FMT frequency domain GNC] \label{thm:PRFD}
CB-FMT has PR if and only if the following two frequency domain conditions are satisfied:
\begin{enumerate}
\item For $k=i$, the DFT of the ACF is a constant; analytically, 
\begin{align}
 R^{(k,k)}(p) = & \notag \\
 \frac{1}{N} \sum_{s=0}^{N-1} G&(p+sL+kQ) H(p+sL+kQ)=1, \label{eq:PR_FD_1}\\
      \forall p  \in \{0,& \dots,L-1\},  \quad \forall k \in \{0,\dots,K-1\}. \notag
\end{align}
\item For $k \neq i$, the DFT of the CCF is always null; analytically, we have
\begin{align}
 R^{(k,i)}(p) = & \notag\\
 \frac{1}{N} \sum_{s=0}^{N-1} G&(p+sL+kQ) H(p+sL+iQ) =0, \label{eq:PR_FD_2}\\
      \forall p  \in \{0,& \dots,L-1\}, \quad \forall k, i \in \{0,\dots,K-1\}. \notag
\end{align}
\end{enumerate}
\end{thm}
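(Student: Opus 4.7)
The plan is to prove both directions of the "iff" by noting that the $L$-point DFT is an invertible map, so Kronecker-type conditions on the decimated time-domain CCF are equivalent to pointwise conditions on its DFT. To make this precise, I would first rewrite the time-domain GNC compactly as $r^{(k,i)}(mN) = \delta_m\,\delta_{k,i}$ for $m \in \{0,\dots,L-1\}$ and $k,i \in \{0,\dots,K-1\}$. Since $r^{(k,i)}(n)$ is a length-$M$ cyclic convolution of two length-$M$ sequences, decimating by $N$ yields an $L$-periodic sequence in $m$, so applying the $L$-point DFT on the index $m$ loses no information.

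Second, I would derive expression \eqref{eq:rg_FD_4} for $R^{(k,i)}(p)$ by chaining three standard facts. The $M$-point DFT of the cyclic convolution in \eqref{eq:rgh_TD} is $G^{(k)}(p)H^{(i)}(p)$; the modulation identity applied to $g^{(k)}(n)=g(n)W_K^{-nk}$ gives $G^{(k)}(p)=G(p+kQ)$ (and likewise $H^{(i)}(p)=H(p+iQ)$), using $Q=M/K$; and the aliasing identity for subsampling, $\text{DFT}_L\{x(mN)\}(p) = (1/N)\sum_{s=0}^{N-1} X(p+sL)$, collapses $M$-point spectra into $L$-point spectra with $N$ aliasing replicas. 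Substituting these into the decimated DFT produces exactly the double-shifted sum $(1/N)\sum_{s=0}^{N-1} G(p+sL+kQ)\,H(p+sL+iQ)$ in \eqref{eq:rg_FD_4}. This step is where I expect the main bookkeeping difficulty: fusing modulation, cyclic convolution, and subsampling requires tracking the exponents carefully so that the $kQ$, $iQ$ shifts and the $sL$ aliasing stride appear with the right signs, but nothing deeper than routine DFT algebra is needed.

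Third, once \eqref{eq:rg_FD_4} is in hand, the theorem is immediate from DFT invertibility on $\C^L$. For $k=i$, the no-ISI condition $r^{(i,i)}(mN)=\delta_m$ is equivalent to $R^{(i,i)}(p)=1$ for every $p\in\{0,\dots,L-1\}$, which is \eqref{eq:PR_FD_1}. For $k\neq i$, the no-ICI condition $r^{(k,i)}(mN)=0$ is equivalent to $R^{(k,i)}(p)=0$ for every $p$, which is \eqref{eq:PR_FD_2}. Both implications follow simultaneously because the $L$-point DFT is a bijection, so no separate argument is needed for the sufficiency and necessity directions.
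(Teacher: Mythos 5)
Your proof is correct and takes essentially the same route as the paper, whose one-line proof simply asserts that the frequency-domain conditions follow by transforming the time-domain GNC via the $L$-point DFT; your three-step derivation of \eqref{eq:rg_FD_4} (cyclic convolution theorem, modulation shift by $kQ$ and $iQ$, and the $\tfrac{1}{N}$ aliasing identity for decimation by $N$) merely fills in the algebra that the paper states without proof in the body text. The bijectivity-of-the-DFT argument you use to get both directions at once is exactly what the paper's proof implicitly relies on.
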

\begin{proof}
The proof is immediate: the frequency domain PR conditions are obtained with a transform of the time domain PR conditions into the frequency domain. 
\end{proof}

\section{Orthogonality Conditions} 
\label{Sec:Orthogonality}
In the presence of Gaussian additive background noise, the SNR is maximized when the transmitter and receiver filters are matched \cite{ref:Proakis}, i.e., $h(n)=g^*_{-}(n) = g^*(-n)$. With matched analysis pulses, if the PR conditions are fulfilled the Cyclic FB will be orthogonal. Orthogonality conditions are a particular case of the PR conditions introduced in Sec. \ref{Sec:CBFMT:PR}. For clarity, the orthogonality conditions are reported in the following as a corollary of Thm. \ref{thm:PRFD}. 

\begin{cor}[Frequency domain orthogonal conditions] \label{cor:FDort}
CB-FMT is orthogonal if and only if the following conditions are fulfilled:
\begin{enumerate}
\item For $k=i$, the ACF has flat DFT spectrum, i.e., 
\begin{align}
 R^{(k,k)}(p) = \frac{1}{N} &\sum_{s=0}^{N-1} |G(p+sL+kQ)|^2=1, \label{eq:ort_FD_1}\\
      \forall p  \in \{0,& \dots,L-1\},  \quad \forall k \in \{0,\dots,K-1\}. \notag
\end{align}
\item For $k \neq i$, the DFT of the CCF is always null, i.e., 
\begin{align}
R^{(k,i)}(p) = & \notag\\
 \frac{1}{N} \sum_{s=0}^{N-1}& G(p+sL+kQ) G^*(p+sL+iQ) =0, \label{eq:ort_FD_2}\\
      \forall p  \in \{0,& \dots,L-1\}, \quad \forall k,i \in \{0,\dots,K-1\}. \notag
\end{align}
\end{enumerate}
\end{cor}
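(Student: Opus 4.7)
The plan is to obtain the corollary directly from Theorem~\ref{thm:PRFD} by specializing the analysis pulse to the matched-filter choice announced in the paragraph preceding the statement, namely $h(n)=g^*((-n)_{\scriptscriptstyle M})$. Since orthogonality of the cyclic filter bank is defined as perfect reconstruction combined with matched synthesis and analysis prototypes, the corollary is essentially Theorem~\ref{thm:PRFD} rewritten under this single extra constraint.

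The one small computation that I would carry out explicitly is to translate the time-domain matched-filter relation into a frequency-domain relation between the $M$-point DFTs $G(p)$ and $H(p)$. Starting from $h(n)=g^*((-n)_{\scriptscriptstyle M})$, applying the $M$-point DFT, and making the change of summation index $m=(-n)_{\scriptscriptstyle M}$ over the cyclic group of order $M$, the conjugation pulls outside the sum while flipping the sign of the exponent, so that the two sign flips cancel and one obtains $H(p)=G^{*}(p)$. This is the only nontrivial step of the proof.

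With this identity in hand, I would substitute $H(p+sL+iQ)=G^{*}(p+sL+iQ)$ directly into conditions \eqref{eq:PR_FD_1} and \eqref{eq:PR_FD_2} of Theorem~\ref{thm:PRFD}. In the diagonal case $k=i$, the product $G(\cdot)H(\cdot)$ collapses to $|G(\cdot)|^{2}$, reproducing \eqref{eq:ort_FD_1}; in the off-diagonal case $k\neq i$, the product becomes $G(\cdot)G^{*}(\cdot)$, reproducing \eqref{eq:ort_FD_2}. Because the matched-filter constraint is imposed once and for all, these substitutions are reversible, and the equivalence "PR iff \eqref{eq:PR_FD_1}--\eqref{eq:PR_FD_2}" provided by Theorem~\ref{thm:PRFD} transfers verbatim to the equivalence "orthogonal iff \eqref{eq:ort_FD_1}--\eqref{eq:ort_FD_2}".

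There is no real obstacle: the statement is a genuine corollary. The only care point is the bookkeeping associated with the cyclic time reversal, ensuring that $(-n)_{\scriptscriptstyle M}$ is interpreted as a permutation of $\{0,\dots,M-1\}$ so that $h$ is a well-defined length-$M$ causal sequence and the identity $H(p)=G^{*}(p)$ holds cleanly under the DFT sign convention adopted in the paper.
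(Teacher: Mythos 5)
Your proposal is correct and follows essentially the same route as the paper, which presents the corollary as the matched-filter specialization $h(n)=g^{*}(-n)$ of Theorem~\ref{thm:PRFD} without further argument; your only addition is to spell out the frequency-domain identity $H(p)=G^{*}(p)$ via the change of index $m=(-n)_{\scriptscriptstyle M}$, a step the paper leaves implicit. The computation is sound under the paper's DFT convention, and the reversibility remark correctly carries the \emph{if and only if} over from the theorem.
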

The periodic property of the DFT allows us to rewrite \eqref{eq:ort_FD_2} as
\begin{align}
 \frac{1}{N} \sum_{s=0}^{N-1} G(p+sL)& G^*(p+sL+kQ) =0, \label{eq:ort_FD_3}\\
      \forall p  \in \{0, \dots,L-1\},& \quad \forall k \in \{1,\dots,K-1\}. \notag
\end{align}
Eq. \eqref{eq:ort_FD_3} is equivalent to \eqref{eq:ort_FD_2} but it depends only on the two variables ($p$ and $k$). The use of \eqref{eq:ort_FD_3} instead \eqref{eq:ort_FD_2} enables us to obtain the orthogonality conditions in matrix form.

\subsection{Orthogonality Conditions in Matrix Form}
\label{Sec:Orthogonality:MatrixForm}
In pulse design, a common practice is to write the orthogonality conditions in matrix form. This form allows to rewrite the conditions \eqref{eq:ort_FD_1} and \eqref{eq:ort_FD_3} as a set of non-linear systems. These systems will be exploited to design the pulses and to show that the number of orthogonal pulses is infinite.

To proceed, \eqref{eq:ort_FD_3} can be viewed as an Hermitian inner product between two vectors $\mathbf{a}$ and $\mathbf{b}$,
\begin{equation}
\mathbf{a} \cdot \mathbf{b} = 0, \label{eq:inner_product}
\end{equation}
where the two vectors are defined as
\begin{align}
\mathbf{a} &= \left[ G(p), G(p+L),\dots \right]^T, \label{eq:va}\\
\mathbf{b} &= \left[ G(p+kQ), G(p+L+kQ),\dots \right]^T. \label{eq:vb}
\end{align}
The vector $\mathbf{a}$ can be easily rewritten as
\begin{align}
\left\{ \mathbf{a} \right\}_i = \left\{ \mathbf{v}_p \right\}_i &\doteq G(p+iL), \label{eq:vp} \\
p \in \{0,\dots,L-1\}, & \quad i \in \{0,\dots,N-1\}. \notag 
\end{align}
In \eqref{eq:vp}, the operator $\left\{ \mathbf{v}_p \right\}_i$ (see Tab. \ref{tab:notation}) yields the $i-th$ element of the vector $\mathbf{v}_p$. The definition of $\mathbf{v}_p$ introduces a partition of the $M$-point DFT coefficients of the prototype pulse into $L$ vectors of size $N \times 1$.

The vector $\mathbf{b}$ in \eqref{eq:vb} can be written as
\begin{align}
\mathbf{b} &= \left[ G\left(c_{(p,k)}+d_{(p,k)}L\right), G\left(c_{(p,k)}+d_{(p,k)}L+L\right),\dots \right]^T. \label{eq:vb2}
\end{align}
where 
\begin{align}
c_{(p,k)} &= (p+kQ)_L, \label{eq:cpk}\\
d_{(p,k)} &= \frac{p+kQ-c_{(p,k)}}{L}=\frac{p+kQ-(p+kQ)_L}{L}. \label{eq:dpk}
\end{align}

Using the vector cyclic shift operator (see Tab. \ref{tab:notation}), \eqref{eq:vb2} can be written as $\mathbf{b} = \tau^{d_{(p,k)}} \left\{ \mathbf{v}_{c_{(p,k)}} \right\}$. For a given $p$, \eqref{eq:ort_FD_3} is the result of $K$ inner products, each between the vector $\mathbf{v}_p$ and its shifted version $\tau^{d_{(p,k)}} \left\{ \mathbf{v}_{c_{(p,k)}} \right\}$ for a certain $k$. These $K$ vectors,
\begin{equation}
\left\{ \mathbf{v}_p, \, \tau^{d_{(p,1)}} \left\{ \mathbf{v}_{c_{(p,1)}} \right\}, \, \dots, \, \tau^{d_{(p,K-1)}} \left\{ \mathbf{v}_{c_{(p,K-1)}} \right\} \right\},
\end{equation}
can be gathered in a matrix of size $K \times N$ defined as
\begin{align}
\mathbf{H}_{\text{ort},p} &= \frac{1}{\sqrt{N}} \hat{\mathbf{H}}_p, \label{eq:Hort}\\
\left\{ \hat{\mathbf{H}}_p \right\}_{*,j}, &= \tau^{d_{(p,j)}} \left\{ \mathbf{v}_{c_{(p,j)}} \right\}, \label{eq:Hort2}\\
p \in \{0, \dots, L-1 \}, & \quad j \in \{0,\dots, K-1\}. \notag
\end{align}

Now, the orthogonality conditions can be stated in matrix form.
\begin{thm} \label{thm:Matrix}
The orthogonality conditions in \eqref{eq:ort_FD_1} and \eqref{eq:ort_FD_2} are satisfied if and only if the matrices defined in \eqref{eq:Hort} have orthonormal columns for any $p \in \left\{ 0, \dots, N_s-1 \right\}$, where $N_s=\gcd(Q,L)$. 
\end{thm}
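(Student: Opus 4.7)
The plan is to reinterpret the frequency-domain orthogonality identities of Corollary~\ref{cor:FDort} as Hermitian inner products between columns of $\hat{\mathbf{H}}_p$, and then to show that the resulting test is redundant along the orbits of $p\mapsto(p+Q)_L$, which collapses the range of $p$ from $\{0,\dots,L-1\}$ to $\{0,\dots,N_s-1\}$.

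First I will identify the entries of $\hat{\mathbf{H}}_p$. Combining $\{\mathbf{v}_{c_{(p,j)}}\}_s=G(c_{(p,j)}+sL)$ with the cyclic shift from \eqref{eq:Hort2}, using $M=NL$ together with the $M$-periodicity of $G$ to replace the modulo-$N$ index $(s+d_{(p,j)})_N$ by $s+d_{(p,j)}$, and invoking $c_{(p,j)}+d_{(p,j)}L=p+jQ$, shows that the $s$-th entry of the $j$-th column of $\hat{\mathbf{H}}_p$ is $G(p+sL+jQ)$. The Hermitian inner product of its $i$-th and $k$-th columns is therefore $N R^{(k,i)}(p)$. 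The diagonal case $k=i$ is exactly \eqref{eq:ort_FD_1} and the off-diagonal case is \eqref{eq:ort_FD_2}, so orthonormality of the columns of $\mathbf{H}_{\text{ort},p}=\hat{\mathbf{H}}_p/\sqrt{N}$ is equivalent to the orthogonality conditions at that fixed value of $p$.

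The second step is to show that the test is invariant under $p\mapsto p'=(p+Q)_L$. Writing $p'=p+Q-\alpha L$ with $\alpha=\lfloor(p+Q)/L\rfloor$, a direct computation gives $c_{(p',j)}=c_{(p,j+1)}$ and $d_{(p',j)}=d_{(p,j+1)}-\alpha$ for $j=0,\dots,K-2$; the wrap-around case $j=K-1$ uses the identity $KQ=M\equiv 0\pmod L$ to obtain $c_{(p',K-1)}=c_{(p,0)}$ and $d_{(p',K-1)}\equiv-\alpha\pmod N$. Hence the columns of $\hat{\mathbf{H}}_{p'}$ are a cyclic permutation of those of $\hat{\mathbf{H}}_p$ with each column additionally shifted by the \emph{same} entrywise cyclic shift $\tau^{-\alpha}$. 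Since $\tau^{-\alpha}$ is unitary on $\mathbb{C}^N$, it preserves all pairwise Hermitian inner products, so $\mathbf{H}_{\text{ort},p'}$ has orthonormal columns precisely when $\mathbf{H}_{\text{ort},p}$ does.

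Finally I will iterate this equivalence along the orbit of $p$ under addition of $Q$ modulo $L$. In $\mathbb{Z}/L\mathbb{Z}$ the subgroup generated by $Q$ equals the subgroup generated by $\gcd(Q,L)=N_s$, so the orbits are the $N_s$ cosets $\{r,r+N_s,r+2N_s,\dots\}$ for $r\in\{0,\dots,N_s-1\}$, and that range contains exactly one representative per coset. Testing orthonormality only for $p\in\{0,\dots,N_s-1\}$ is therefore equivalent to testing it for all $p\in\{0,\dots,L-1\}$, which yields the claim. The main technical obstacle is the bookkeeping inside the reduction step: verifying that the integer correction $\alpha$ appears identically in all $K$ column shifts and that the single wrap-around column $j=K-1$ is absorbed cleanly by $KQ\equiv 0\pmod L$, with no residual phase or index misalignment that would break the common-shift structure.
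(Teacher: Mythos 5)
Your proposal is correct and follows essentially the same route as the paper's Appendix~\ref{Appx:MatrixForm}.A: you identify the pairwise Hermitian inner products of the columns of $\hat{\mathbf{H}}_p$ with $N R^{(k,i)}(p)$, and then establish the redundancy of the matrices with $p \geq N_s$ via the index relations on $c_{(p,k)}$ and $d_{(p,k)}$ under shifts of $p$ by $Q$ modulo $L$. Your iterated single-step shift with the coset argument in $\mathbb{Z}/L\mathbb{Z}$ is just the group-theoretic restatement of the paper's diophantine-equation lemma and its general $k_2 Q$ swap relation \eqref{eq:cpk2}--\eqref{eq:dpk2}, with the added (welcome) care of tracking the common cyclic shift $\tau^{-\alpha}$ and the wrap-around column explicitly.
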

\begin{proof}
The proof in reported in Appx. \ref{Appx:MatrixForm}.A.
\end{proof}

The results of Thm. \ref{thm:Matrix} say that in general the $M$ unknowns (the filter coefficients) are partitioned in $L$ vectors of size $N \times 1$. Then, these vectors can be grouped in $N_s$ sets, one for every matrix $\mathbf{H}_{\text{ort},p}$, $p\in \left\{0,..., N_s-1\right\}$. These matrices are composed by $L/N_s$ distinct $ \mathbf{v}_p$ vectors and their circular shifted versions. Furthermore, the matrices are disjoint, i.e., they contain different sets of unknowns. Therefore, to obtain orthogonality, the $N_s$ sets of relations, each with $M/N_s$ unknowns, can be solved independently.

An interesting case is the critically sampled CB-FMT system, for which $K=N$. This system offers the maximum transmission rate. In this case, the matrices are circulant as stated and proved in the following corollary.

\begin{cor}[Critically sampled case] \label{cor:critically}
When the system is critically sampled, i.e., $K=N$ and $Q=L$, the $N_s= L$ matrices $\mathbf{H}_{\text{ort},p}$ are circulant matrices. In this case, the matrices are orthogonal if and only if the vectors \eqref{eq:vp} have unit modulus $N$-point DFT, i.e., $\left| \left\{\mathbf{F}_N \mathbf{v}_p \right\}_i \right|=1$, where $\mathbf{F}_N$ is the $N \times N$ DFT matrix.
\end{cor}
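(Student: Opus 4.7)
My plan is to first expose the circulant structure by specializing the general formulas for the columns of $\hat{\mathbf{H}}_p$ to the case $Q=L$, and then to reduce the orthonormality of the columns of $\mathbf{H}_{\text{ort},p}$ to a spectral condition on $\mathbf{v}_p$ via the standard unitary diagonalization of circulant matrices.

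For the first step, I substitute $Q=L$ in \eqref{eq:cpk}--\eqref{eq:dpk}. Since $kQ=kL$ is a multiple of $L$, this immediately gives $c_{(p,k)}=(p+kL)_L=p$ and $d_{(p,k)}=k$, so by \eqref{eq:Hort2} the $j$-th column of $\hat{\mathbf{H}}_p$ is $\tau^j\{\mathbf{v}_p\}$. This is precisely the defining property of the $N\times N$ circulant matrix generated by $\mathbf{v}_p$, which establishes the first assertion of the corollary; the count $N_s=\gcd(Q,L)=\gcd(L,L)=L$ follows at once.

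For the second step, I invoke the unitary diagonalization $\hat{\mathbf{H}}_p=\mathbf{F}_N^H\boldsymbol{\Lambda}_p\mathbf{F}_N$, where $\mathbf{F}_N$ is the unitary $N$-point DFT matrix and $\boldsymbol{\Lambda}_p$ is diagonal with entries $\sqrt{N}\{\mathbf{F}_N\mathbf{v}_p\}_i$, i.e.\ the eigenvalues of the circulant as given by the DFT of its first column. Absorbing the $1/\sqrt{N}$ normalization of \eqref{eq:Hort} and using $\mathbf{F}_N^H\mathbf{F}_N=\mathbf{I}_N$, the Gram matrix simplifies to $\mathbf{H}_{\text{ort},p}^H\mathbf{H}_{\text{ort},p}=\mathbf{F}_N^H\,\mathrm{diag}(|\{\mathbf{F}_N\mathbf{v}_p\}_i|^2)\,\mathbf{F}_N$, which equals $\mathbf{I}_N$ if and only if $|\{\mathbf{F}_N\mathbf{v}_p\}_i|=1$ for all $i\in\{0,\dots,N-1\}$, yielding the claimed characterization.

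The argument is largely routine once the circulant structure is exposed. The only point requiring real care is tracking the different normalization conventions in play simultaneously: the $1/\sqrt{N}$ appearing in the definition of $\mathbf{H}_{\text{ort},p}$, the unitarity of $\mathbf{F}_N$, and the $\sqrt{N}$ factor that appears in the eigenvalues of the circulant when the first column is expressed through the unitary DFT; an off-by-factor here would turn the final condition into $|\{\mathbf{F}_N\mathbf{v}_p\}_i|=\sqrt{N}$ or $1/\sqrt{N}$ rather than $1$. An equivalent route, should one wish to avoid invoking the spectral theorem for circulants, is to rewrite the column orthonormality as the cyclic autocorrelation of $\mathbf{v}_p$ being $N\delta_m$ and then apply the Wiener--Khinchin identity that relates this autocorrelation to $|\mathbf{F}_N\mathbf{v}_p|^2$.
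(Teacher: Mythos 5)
Your proof is correct and follows essentially the same route as the paper's Appendix~\ref{Appx:MatrixForm}.B: specialize \eqref{eq:cpk}--\eqref{eq:dpk} to $Q=L$ to get $c_{(p,k)}=p$, $d_{(p,k)}=k$ and hence the circulant structure, then characterize orthogonality through the DFT spectrum of the circulant. If anything, your explicit Gram-matrix computation $\mathbf{H}_{\text{ort},p}^H\mathbf{H}_{\text{ort},p}=\mathbf{F}_N^H\,\mathrm{diag}\left(\left|\left\{\mathbf{F}_N\mathbf{v}_p\right\}_i\right|^2\right)\mathbf{F}_N$ is slightly more complete than the paper's appeal to the unit-modulus-eigenvalue property of orthogonal matrices, since it yields both directions of the equivalence at once and pins down the $\sqrt{N}$ normalization (unitary $\mathbf{F}_N$ versus the unnormalized DFT in \eqref{eq:eigen_circulant}) that the paper's proof leaves implicit.
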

\begin{proof}
The proof in reported in Appx. \ref{Appx:MatrixForm}.B.
\end{proof}

In conclusion, when Thm. \ref{thm:Matrix} is satisfied, the prototype pulse is orthogonal. The set of orthogonal pulses has infinite cardinality, as stated in the following theorem.
\begin{thm} \label{thm:inf}
Given a system with parameters $(K, N, M)$, there exists an infinite number of prototype pulses that satisfy the Thm. \ref{thm:Matrix} orthogonality conditions. 
\end{thm}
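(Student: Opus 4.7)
The statement follows from Theorem~\ref{thm:Matrix} combined with a dimension-counting argument and, more simply, with an explicit continuous family. By Theorem~\ref{thm:Matrix}, designing an orthogonal CB-FMT pulse is equivalent to choosing, for each $p \in \{0,\dots,N_s-1\}$, the $K$ columns of $\mathbf{H}_{\text{ort},p}$ so that they form an orthonormal system in $\mathbb{C}^N$; the remark following that theorem further observes that these $N_s$ matrices depend on disjoint subsets of the DFT coefficients $G(p)$. Hence the set of orthogonal pulses is in bijection with the Cartesian product of the $N_s$ individual orthonormality sets, and it suffices to exhibit an infinite family in just one of them (or in the product as a whole).

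The first step of my plan is therefore to secure the existence of at least one orthogonal pulse $G$. This can be invoked, for instance, from Corollary~\ref{cor:critically} in the critically sampled case, or from the RRC-based sampled design recalled in the introduction. Given such a $G$, I would then define the one-parameter family
\begin{equation}
G_\theta(p) = e^{j\theta}\, G(p), \qquad \theta \in [0,2\pi).
\end{equation}
Since multiplication by the unit-modulus constant $e^{j\theta}$ preserves $|G(p)|^2$ and leaves every Hermitian product $G(\cdot) G^*(\cdot)$ invariant (the two phases cancel), both \eqref{eq:ort_FD_1} and \eqref{eq:ort_FD_3} hold for $G_\theta$ whenever they hold for $G$. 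The $G_\theta$ are pairwise distinct, providing a continuous (uncountable) family of orthogonal prototype pulses and establishing the theorem.

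A more informative second step, which I would include for intuition, uses the geometry of the solution set: the collection of $N$-dimensional orthonormal $K$-frames is the complex Stiefel manifold $V_K(\mathbb{C}^N)$, a smooth compact manifold of real dimension $2NK - K^2$, strictly positive whenever $1 \le K \le N$. This not only confirms infinitude but quantifies the degrees of freedom available for designing the pulse. The main obstacle in pushing this second viewpoint to a fully rigorous bound on the manifold dimension is that $\mathbf{H}_{\text{ort},p}$ is not an arbitrary Stiefel element: its columns are cyclic shifts of a fixed family of $N$-vectors, so one would have to verify that this structural constraint does not cut the solution set down to a finite one. The elementary $U(1)$ construction above bypasses that subtlety entirely and is the route I would take to close the proof.
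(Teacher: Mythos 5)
Your proof is correct, but it takes a genuinely different route from the paper's. The paper argues constructively: in the critically sampled case it invokes Cor.~\ref{cor:critically} and parameterizes solutions by $N$ free phases, $\left\{ \mathbf{F}_N \mathbf{v}_p \right\}_i = e^{j\phi_i}$, so every phase choice yields a valid pulse; for $K<N$ it then obtains a valid $\mathbf{H}_{\text{ort},p}$ by dropping $N-K$ columns of an $N \times N$ circulant orthogonal matrix, and infinitude follows because the phases range over a continuum. You instead take a single known orthogonal pulse and exploit the global-phase invariance $G_\theta(p)=e^{j\theta}G(p)$, under which \eqref{eq:ort_FD_1} and \eqref{eq:ort_FD_3} are manifestly preserved since the phases cancel in every Hermitian product; this is airtight and more elementary, but it is conditional on a seed pulse. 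For $K=N$ your appeal to Cor.~\ref{cor:critically} settles existence within the paper; for $K<N$ you lean on the cited sampled-RRC design, which is indeed exactly orthogonal (with roll-off at most $(Q-L)/L$ the pulse support fits within $Q$ DFT bins, which makes \eqref{eq:ort_FD_3} hold trivially, and the raised-cosine Nyquist property gives \eqref{eq:ort_FD_1} at the sampled bins), though a fully self-contained alternative would be to reuse the paper's column-dropping construction, which is precisely the device that resolves the structural obstacle you honestly flag in your Stiefel-manifold remark (whose dimension count $2NK-K^2$ for $V_K(\C^N)$ is correct). The trade-off between the two routes: the paper's proof exhibits a multi-dimensional family of genuinely inequivalent pulses --- the design freedom later exploited by the optimization in Sec.~\ref{Sec:Design} --- whereas your $U(1)$ orbit, while infinite and sufficient for the literal statement, consists of pulses that are all equivalent under any modulus-based metric (identical IBOB energy ratio and SINR), so it proves infinitude without conveying any usable degrees of freedom.
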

\begin{proof}
The proof in reported in Appx. \ref{Appx:inf}.
\end{proof}

Two objective functions are introduced in Sec. \ref{Sec:Design} to constrain the search and obtain optimal pulses that satisfy \ref{thm:Matrix}.

\subsection{Orthogonality under Parameters Variation}
\label{Sec:Orthogonality:Params_variation}
In this section, we discuss whether an orthogonal pulse in a CB-FMT system with a given set of parameters $(K, N, M)$ can be "reused" as a mother pulse once the parameters are varied.  When the prototype pulse is frequency confined, i.e., $G(i)=0 \text{ for } i \in \left[ Q, \dots, M-1 \right]$, its FD coefficients can be used to obtain an orthogonal pulse in a system where the parameters $(K, N, M)$ are all increased by a factor $\alpha_1$ (which corresponds to an increase of the pulse length) or when the number of sub-channels $K$ is increased by a factor $\alpha_2$ while $M$ is kept constant (which corresponds to maintain the pulse length constant). These results are stated in the following two theorems and a graphical representation is depicted in Fig. \ref{fig:params_extension}.    

\begin{figure}[t]
\centering
\includegraphics[width=3.4in]{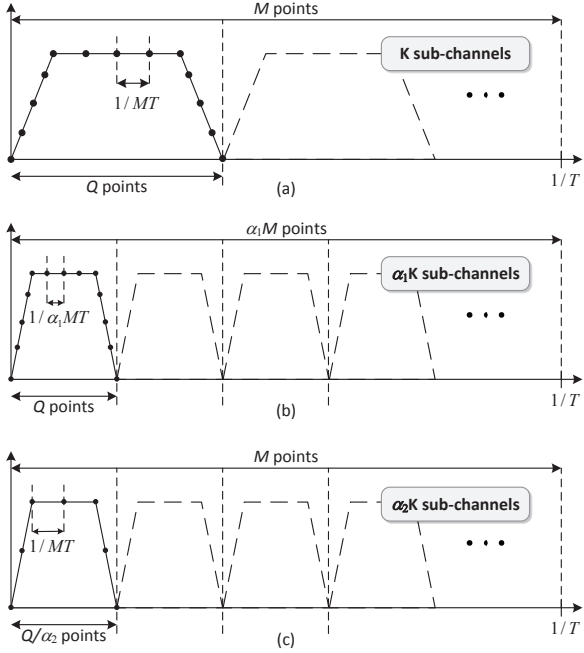}
   \caption{Graphical representation of the orthogonal pulse spectrum under a system parameters variation. In (a), the mother pulse designed for the set of parameters $(K,N,M)$. In (b), Thm. \ref{thm:params_1} allows to obtain an orthogonal pulse for the set of parameters $(\alpha_1 K, \alpha_1 N, \alpha_1 M)$. In (c), Thm. \ref{thm:params_2} allows to obtain an orthogonal pulse for the set of parameters $(\alpha_2 K, \alpha_2 N, M)$.}
   \label{fig:params_extension}
\end{figure}

\begin{thm}[\bfseries Filter length variation] \label{thm:params_1}
Given an orthogonal prototype pulse with FD coefficients $G(i)$ that satisfies Cor. \ref{cor:FDort} and designed for a set of parameters $(K, N, M)$, the prototype pulse defined as
\begin{align} \label{eq:Galpha_1}
G_{\alpha_1}(i) &= \begin{cases} 
\sqrt{\alpha_1} G(i) & \text{for } i \in \left[ 0,\dots,Q-1 \right]\\
0 & \text{otherwise}
\end{cases}\\
 i & \in \left[0,\dots, \alpha_1 M -1 \right] \notag
\end{align}
is orthogonal for the following set of parameters: 
\begin{equation} \label{eq:params_alpha_1}
(\alpha_1 K, \alpha_1 N, \alpha_1 M),
\end{equation}
where $\alpha_1 \in \R$ is a constant s.t. all the parameters in \eqref{eq:params_alpha_1} are integer numbers.
\end{thm}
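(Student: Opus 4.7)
The plan is to verify directly the two orthogonality conditions of Cor.~\ref{cor:FDort} for the new pulse $G_{\alpha_1}$ in the enlarged system, exploiting the key observation that the block size $L'=M'/N'=L$ and the per-subchannel spectral length $Q'=M'/K'=Q$ are invariant under the transformation. Only the number of terms in the orthogonality sums grows from $N$ to $\alpha_1 N$, and the amplitude scaling $\sqrt{\alpha_1}$ in the definition of $G_{\alpha_1}$ is chosen to cancel the $1/\alpha_1$ in the new normalization. The frequency confinement, which forces the support of $G_{\alpha_1}$ to lie in $\{0,\dots,Q-1\}$ inside the period $M'$, is what makes the sums collapse to the original ones.

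For the ACF condition, I would fix $k \in \{0,\dots,K'-1\}$ and $p \in \{0,\dots,L-1\}$ and write $p+kQ = qL + p'$ with $p'=(p+kQ)\bmod L \in \{0,\dots,L-1\}$. Then as $s$ varies over $\{0,\dots,\alpha_1 N-1\}$ the indices $(p+sL+kQ)\bmod M'$ traverse exactly the progression $\{p'+jL : j=0,\dots,\alpha_1 N-1\}\subseteq\{0,\dots,M'-1\}$. Using $G_{\alpha_1}(i)=\sqrt{\alpha_1}\,G(i)$ on its support and zero outside, the sum $R_{\alpha_1}^{(k,k)}(p)$ collapses to $\frac{1}{N}\sum_{j}|G(p'+jL)|^2$ restricted to those $j$ with $p'+jL \in \{0,\dots,Q-1\}$. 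By the identical progression argument applied to the original system, this equals $R^{(0,0)}(p')$, which is $1$ by the assumed orthogonality of $G$; hence $R_{\alpha_1}^{(k,k)}(p)=1$.

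For the CCF condition with $k \neq i$, I would show that every summand in $\sum_s G_{\alpha_1}(p+sL+kQ)\,G_{\alpha_1}^{*}(p+sL+iQ)$ vanishes. Both factors are non-zero only when the two indices land in the support $\{0,\dots,Q-1\}$ modulo $M'$. Their difference is $(i-k)Q \bmod M'$, a non-zero element of $\{Q,2Q,\dots,(K'-1)Q\}$ because $0<|i-k|\leq K'-1$. But two elements of $\{0,\dots,Q-1\}$ can differ modulo $M'=\alpha_1 KQ$ only by values in $\{0,\dots,Q-1\}\cup\{M'-Q+1,\dots,M'-1\}$, and the only multiple of $Q$ in that union is $0$. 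Hence no $s$ can make both factors simultaneously non-zero, so $R_{\alpha_1}^{(k,i)}(p)=0$.

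The main obstacle is not conceptual but lies in the modular bookkeeping: one must verify cleanly that the progression argument used for the ACF always delivers a reduced index $p'$ in $\{0,\dots,L-1\}$ for every choice of $k$, and that the wrap-around in the CCF cannot pull two indices separated by a multiple of $Q$ back into the common support $\{0,\dots,Q-1\}$. Both checks rely on the divisibility relations $M'=\alpha_1 NL=\alpha_1 KQ$ that follow from the parameter scaling, after which the amplitude factor $\sqrt{\alpha_1}$ is forced to be the unique compensation required for orthogonality in the enlarged system.
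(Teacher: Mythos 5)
Your proposal is correct and takes essentially the same route as the paper's proof in Appendix~\ref{Appx:params_1}: both verify the no-ISI and no-ICI conditions of Cor.~\ref{cor:FDort} directly in the enlarged system, exploiting the invariance $L'=L$, $Q'=Q$, the $\sqrt{\alpha_1}$ normalization, and the frequency confinement to collapse the $\alpha_1 N$-term sums to the original $N$-term ones. If anything, your explicit modular wrap-around argument for the CCF case fills in a detail the paper dispatches in one line (``null due to the frequency confinement''), since termwise vanishing of the products --- not of the shifted factor alone --- is what is actually needed.
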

\begin{proof}
The proof is reported in Appx. \ref{Appx:params_1}.
\end{proof}
%In this case, the DFT bin goes from $1/(MT)$ to $1/(\alpha_1 MT)$. 

\begin{thm}[\bfseries Constant filter length] \label{thm:params_2}
Given an orthogonal prototype pulse that satisfies Cor. \ref{cor:FDort} and designed for a set of parameters $(K, N, M)$, the prototype pulse defined as
\begin{align} \label{eq:Galpha_2}
G_{\alpha_2}(i) &= \begin{cases} 
 \sqrt{\alpha_2} G(\alpha_2 i) & \text{for } i \in \left[ 0,\dots,Q/\alpha_2-1 \right]\\
0 & \text{otherwise}
\end{cases}\\
 i & \in \left[0,\dots, M-1 \right] \notag
\end{align}
is orthogonal for the following set of parameters: 
\begin{equation} \label{eq:params_alpha_2}
(\alpha_2 K, \alpha_2 N, M),
\end{equation}
where $\alpha_2 \in \Z^+$.
\end{thm}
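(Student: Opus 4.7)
The plan is to verify that $G_{\alpha_2}$ satisfies both parts of Cor.~\ref{cor:FDort} evaluated at the new parameters $K'=\alpha_2 K$, $N'=\alpha_2 N$, $M'=M$ (so $Q'=Q/\alpha_2$ and $L'=L/\alpha_2$), exploiting the frequency confinement of $G_{\alpha_2}$ (support contained in $[0,Q'-1]$) and the scaling identities $\alpha_2 L'=L$, $\alpha_2 Q'=Q$.

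The cross-correlation condition \eqref{eq:ort_FD_2} is essentially trivial. For any $k'\in\{1,\dots,K'-1\}$ one has $k'Q'\in[Q',M-Q']$, so the shifted interval $[k'Q',k'Q'+Q'-1]$ stays within $[Q',M-1]$ and is therefore disjoint from $[0,Q'-1]$. Hence at every $s$ at least one of the two factors in $G_{\alpha_2}(p'+sL')\,G_{\alpha_2}^*(p'+sL'+k'Q')$ vanishes, and the sum is identically zero without any further use of the original orthogonality.

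For the auto-correlation condition \eqref{eq:ort_FD_1}, the strategy is to decompose each $s\in\{0,\dots,N'-1\}$ uniquely as $s=s''+jN$ with $s''\in\{0,\dots,N-1\}$ and $j\in\{0,\dots,\alpha_2-1\}$. Using $L'N=M/\alpha_2$, the argument rewrites as $(p'+s''L'+k'Q'+jM/\alpha_2)\bmod M$. Since the $\alpha_2$ values obtained as $j$ varies are spaced by $M/\alpha_2=KQ'\ge Q'$, at most one of them can lie in the support $[0,Q'-1]$; a direct check shows that such a $j$ exists if and only if the rescaled index $(\alpha_2 p'+s''L+k'Q)\bmod M$ belongs to $[0,Q-1]$, on which event $|G_{\alpha_2}(\cdot)|^2=\alpha_2|G((\alpha_2 p'+s''L+k'Q)\bmod M)|^2$. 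Assembling the contributions (and using the frequency confinement of $G$ to cover the complementary $s''$ automatically), the LHS of \eqref{eq:ort_FD_1} for $G_{\alpha_2}$ collapses to
\[
\frac{1}{N}\sum_{s''=0}^{N-1}\bigl|G\bigl((\alpha_2 p'+s''L+k'Q)\bmod M\bigr)\bigr|^2,
\]
which equals $1$ by the original ACF condition applied at $(p,k)=(\alpha_2 p',\,k'\bmod K)\in[0,L-1]\times[0,K-1]$ (recalling $k'Q\equiv(k'\bmod K)Q\pmod M$).

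The main obstacle is the index bookkeeping in the ACF step: proving the "at most one $j$" claim and showing that the resulting $\alpha_2$-to-$1$ collapse of terms, combined with the factor $|G_{\alpha_2}|^2=\alpha_2|G|^2$, is exactly compensated by $N'=\alpha_2 N$ in the normalization. A minor subtlety arises when $k'\in\{K,2K,\dots,(\alpha_2-1)K\}$: the ACF-type reduction would there formally map the CCF onto the original ACF (which equals $1$, not $0$), but this is harmless because the disjoint-support argument of the second paragraph already kills those terms before any reduction is performed.
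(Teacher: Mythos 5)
Your proposal is correct and, at the skeleton level, follows the same route as the paper's Appendix~\ref{Appx:params_2}: both verify the two conditions of Cor.~\ref{cor:FDort} at the new parameters, and for the no-ISI part both use the same decomposition $s=s''+jN$, $j\in\{0,\dots,\alpha_2-1\}$ (the paper's $s_1$-blocks) together with $NL'=M/\alpha_2$, so that only one of the $\alpha_2$ aliases survives, $|G_{\alpha_2}|^2=\alpha_2|G|^2$ on the survivor, and the factor $\alpha_2$ is absorbed by the $1/(\alpha_2 N)$ normalization, reducing everything to the original ACF condition at $p=\alpha_2 p'$ (your extra bookkeeping with the offset $k'Q'$, justified by noting $\alpha_2$ divides $L$, $Q$ and $M$, is redundant --- the paper implicitly drops it via DFT periodicity, cf.\ \eqref{eq:ort_FD_3} --- but harmless). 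Where you genuinely depart is the no-ICI step, and your version is the more careful one: the paper rescales first, passing from \eqref{eq:ort_proof_ici_alpha_2_1} to \eqref{eq:ort_proof_ici_alpha_2_2}, and then asserts that $G^*(\alpha_2 p+sL+kQ)$ vanishes for all $k\in\{1,\dots,\alpha_2K-1\}$ by frequency confinement; taken literally this fails when the rescaled index wraps modulo $M$, e.g.\ $p=0$, $s=0$, $k=K$ gives $G(0)G^*(M)=|G(0)|^2\neq 0$ --- exactly the $k'\in\{K,2K,\dots,(\alpha_2-1)K\}$ cases you flag. Your disjoint-support argument (the support $[0,Q'-1]$ of $G_{\alpha_2}$ and its shift by $k'Q'\leq M-Q'$ never overlap, so every product term dies before any rescaling, with no appeal to the original orthogonality) is precisely what makes the theorem true at those indices: in \eqref{eq:ort_proof_ici_alpha_2_1} the factor $G^*_{\alpha_2}(M/\alpha_2)$ vanishes by construction even though $G^*(0)$ does not. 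So your proof buys a patch of a genuine looseness in the paper's own ICI argument, at the cost of somewhat heavier index bookkeeping in the ACF step, which you nonetheless carry out correctly.
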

\begin{proof}
The proof is reported in Appx. \ref{Appx:params_2}.
\end{proof}

% *****************************************************************

% PRACTICAL ASPECTS

% *****************************************************************
\section{Transmission Medium Effects on Orthogonality}
\label{Sec:Practical}
The orthogonality discussed in previous sections, assures that the cyclic FB is orthogonal when the transmission medium is ideal, i.e., $y(n) = x(n)$. In this section, the effect of a transmission medium that is not ideal is considered. In general, we assume it to be modeled with a linear and time variant filter with impulse response $g_\text{eq}(n, m)$. Two specific situations are envisioned: a) the equivalent filter is time invariant and it models the effects of the D/A-A/D converter filters and of a dispersive medium; b) the equivalent filter is time variant and it models a medium that exhibits time variant and frequency selective fading as in mobile wireless communication channels. It follows that the received signal can be written as 
\begin{align}
 y(n) &= x*g_\text{eq}(n) \notag\\
 &= \sum_{m=0}^{P-1} g_\text{eq}(n, n-m) x(n-m), \label{eq:geq2}\\
 g_\text{eq}(n, m) &= \sum_{s=0}^{P-1} \alpha_s(n) \delta(m-s) \label{eq:getv}
 \end{align} 
where $*$ is the linear convolution operator. In \eqref{eq:getv}, $P$ and $\alpha_s(n)$ are the impulse response length (in samples) and the impulse response time variant coefficients, respectively. In general, the equivalent filter may cause an orthogonality loss in the filter bank. To simplify the equalization task, a cyclic prefix (CP) can be added to each block of coefficients in \eqref{eq:CBFMT_TX}, similarly to CP-OFDM. If the CP length (in samples) is greater than the equivalent filter length $P$ (with $P < \mu < M$), the convolution in \eqref{eq:geq2} becomes cyclic w.r.t. the variable $m$. Thus,
\begin{align}
 y(n) = x \otimes g_\text{eq}(n). \label{eq:geq3}
 \end{align}
 
 %i.e., $g_\text{eq}(n, m+M)=g_\text{eq}(n, m)$
 
\subsection{Linear Time Invariant Medium}
When the equivalent filter is time invariant, orthogonality is maintained if the following theorem is satisfied.
\begin{thm}[\bfseries Orthogonality with equiv. filter] \label{thm:geq}
An orthogonal CB-FMT system keeps its orthogonality in the presence of an equivalent filter between the transmitter and the receiver, if and only if the filter $g_\text{eq}(n)$ is such that $g_\text{eq} \otimes g_\text{eq,-}^*(mN)=\delta_m$, the receiver pulse in \eqref{eq:hh2} is equal to $h^{(i)}(n)= \left( g_\text{eq,-} \otimes g^{(i)}_-(n) \right)^*$ and the CP length is greater than the equivalent filter length $P$.
\end{thm}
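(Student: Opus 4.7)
The plan is to reduce the problem to Corollary~\ref{cor:FDort} by absorbing the equivalent filter into an effective transceiver. First, the CP assumption $\mu>P$ converts the linear convolution in \eqref{eq:geq2} into the cyclic one in \eqref{eq:geq3}, so on each block $y(n) = (x \otimes g_\text{eq})(n)$. Substituting into the analysis-bank relation \eqref{eq:CBFMT_RX} and repeating the manipulations that led to \eqref{eq:CBFMT_IN_OUT_CORRELATION}, the input--output map retains the form $\tilde{z}^{(i)}(mN) = \sum_{k,\ell}\tilde{a}^{(k)}(\ell N)\,\tilde{r}^{(k,i)}((m-\ell)N)$ with the modified end-to-end CCF $\tilde{r}^{(k,i)}(n) = (g^{(k)} \otimes g_\text{eq} \otimes h^{(i)})(n)$. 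Using that cyclic convolution commutes with conjugation and time reversal, the prescribed receive pulse rewrites as $h^{(i)} = g^{*}_{\text{eq},-}\otimes g^{(i)*}_{-}$, i.e., it is the matched filter for the effective transmit pulse $p^{(i)}\doteq g^{(i)}\otimes g_\text{eq}$. Associativity and commutativity then yield the clean factorization
\begin{equation}
\tilde{r}^{(k,i)}(n) = \bigl(g^{(k)}\otimes g^{(i)*}_{-}\bigr) \otimes \bigl(g_\text{eq}\otimes g^{*}_{\text{eq},-}\bigr)(n) = r^{(k,i)}(n)\otimes \rho_\text{eq}(n),
\end{equation}
where $r^{(k,i)}$ is the ideal-medium CCF of Sec.~\ref{Sec:CBFMT:PR} and $\rho_\text{eq}\doteq g_\text{eq}\otimes g^{*}_{\text{eq},-}$ is the cyclic autocorrelation of the equivalent filter.

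For sufficiency I would then pass to the frequency domain. Applying the aliasing identity \eqref{eq:rg_FD_4} to the convolution above gives, for the $L$-point DFT of the $N$-subsampled $\tilde{r}^{(k,i)}$,
\begin{equation}
\tilde{R}^{(k,i)}(p) = \frac{1}{N}\sum_{s=0}^{N-1} G(p+sL+kQ)\,G^{*}(p+sL+iQ)\,|G_\text{eq}(p+sL)|^{2}.
\end{equation}
The Nyquist-$N$ hypothesis $\rho_\text{eq}(mN)=\delta_{m}$ translates by Poisson into $(1/N)\sum_{s}|G_\text{eq}(p+sL)|^{2}=1$ for every $p$, while Cor.~\ref{cor:FDort} applied to $g$ gives the unweighted identity $(1/N)\sum_{s} G(p+sL+kQ)\,G^{*}(p+sL+iQ) = \delta_{k,i}$. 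Combining the two through the disjoint-support/inner-product structure on each aliasing coset imposed by the matrix form of Thm.~\ref{thm:Matrix}, the weighted sum collapses to $\delta_{k,i}$ for every $p$, and inverse DFT produces $\tilde{r}^{(k,i)}(mN) = \delta_{m}\delta_{k,i}$, i.e., end-to-end orthogonality.

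Necessity is handled by reversing the same chain: the matched-filter form of $h^{(i)}$ is forced by the SNR-maximization argument that opens Sec.~\ref{Sec:Orthogonality} (any other receive pulse either wastes signal energy or feeds ICI through the equivalent filter), and then requiring $\tilde{R}^{(k,k)}(p)=1$ for every $p,k$ given the already orthonormal columns $\mathbf{v}_{p}$ of $\mathbf{H}_{\text{ort},p}$ in Thm.~\ref{thm:Matrix} forces $(1/N)\sum_{s}|G_\text{eq}(p+sL)|^{2}=1$, which is precisely $\rho_\text{eq}(mN)=\delta_{m}$; sufficient CP is independently needed for \eqref{eq:geq3} to hold at all. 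The main obstacle I anticipate is the collapse step in sufficiency: because $|G_\text{eq}(p+sL)|^{2}$ sits as a weight inside the very aliasing sum that defines the Hermitian inner product among the columns of $\mathbf{H}_{\text{ort},p}$, the two Nyquist-$N$ hypotheses do not decouple additively but only through that inner-product structure. Making this bookkeeping rigorous, particularly for non-critically-sampled choices of $K$ and $N$ where the cosets of the aliasing grid may intersect multiple sub-channel supports, is where the technical weight of the proof will lie.
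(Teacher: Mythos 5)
Your opening retraces the paper's own proof almost verbatim: the CP assumption turning \eqref{eq:geq2} into \eqref{eq:geq3}, the effective transmit pulse $g_1^{(k)}=g^{(k)}\otimes g_\text{eq}$, the identification of the prescribed $h^{(i)}$ as its matched filter, and the factorization $\tilde{r}^{(k,i)}=r^{(k,i)}\otimes\rho_\text{eq}$ with $\rho_\text{eq}=g_\text{eq}\otimes g_\text{eq,-}^*$ is exactly the paper's \eqref{eq:rgheq_TD}. The divergence --- and the genuine gap --- is what you do next. The paper finishes in the time domain in two lines: it samples the factorized CCF at $mN$, invokes the orthogonality of $g$ (i.e., $r^{(k,i)}(mN)=\delta_m\delta_{k,i}$) to reduce \eqref{eq:rgheq_TD} to $r_\text{eq}^{(i,i)}(mN)=g_\text{eq}\otimes g_\text{eq,-}^*(mN)$, and reads off $\rho_\text{eq}(mN)=\delta_m$ as simultaneously necessary and sufficient. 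You instead pass to the frequency domain and assert that the weighted aliasing sum ``collapses to $\delta_{k,i}$''; that assertion \emph{is} the theorem, and it does not follow from the two facts you combine. Unweighted orthonormality of the coset vectors plus a flat coset average of $|G_\text{eq}|^2$ do not imply weighted orthonormality: take $K=N=2$ with the rectangular critically sampled pulse ($G$ supported on $\{0,\dots,L-1\}$ with $|G|^2=N$ there), and set $|G_\text{eq}(p)|^2=2$, $|G_\text{eq}(p+L)|^2=0$ for even $p$ with the complementary assignment for odd $p$; then $\frac{1}{N}\sum_s|G_\text{eq}(p+sL)|^2=1$ for every $p$, i.e., $\rho_\text{eq}(mN)=\delta_m$, yet $\tilde{R}^{(k,k)}(p)=|G_\text{eq}(p+s_kL)|^2$ alternates between $2$ and $0$ in $p$, so the weighted ACF is not flat. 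The cross terms you worried about --- the values of $r^{(k,i)}(n)$ and $\rho_\text{eq}(n)$ at lags $n$ that are not multiples of $N$, which genuinely mix in $\sum_n r^{(k,i)}(n)\,\rho_\text{eq}(mN-n)$ --- are precisely what the hypotheses leave uncontrolled (the paper's own time-domain reduction passes over them silently, so your flagged obstacle is real); but flagging the obstacle is not resolving it, and your sufficiency direction is therefore unproven as written.

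Your necessity argument also misses the mark in two ways. First, the matched-filter form of $h^{(i)}$ is part of the theorem's \emph{hypothesis}, not a conclusion, so the SNR-maximization appeal from Sec.~\ref{Sec:Orthogonality} has no role: there is nothing to force. Second, extracting $\frac{1}{N}\sum_s|G_\text{eq}(p+sL)|^2=1$ from $\tilde{R}^{(k,k)}(p)=1$ relies on the same weighted-versus-unweighted interchange as your sufficiency step; for a general orthogonal $G$ whose coset vectors have several nonzero entries, the conditions $\frac{1}{N}\sum_s|G(p+sL+kQ)|^2\,|G_\text{eq}(p+sL)|^2=1$ over the available $k$ do not by themselves isolate the flat average of the weights. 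In the paper both directions drop out at once from the single time-domain identity $r_\text{eq}^{(i,i)}(mN)=\rho_\text{eq}(mN)$; if you wish to keep the frequency-domain framing, you must first establish the FD counterpart of that identity, at which point the detour buys nothing over the time-domain argument you already had in hand after your factorization step.
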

\begin{proof}
The proof is reported in Appx. \ref{Appx:geq}
\end{proof}

Generally, the equivalent filter does not satisfy Thm. \ref{thm:geq}. In this case, to restore the orthogonality an equalizer is required. The cyclic convolution suggests a frequency domain equalization \cite{CBFMT_ICC}. For each received block of samples, the CP is disregarded and the $M$-point DFT of the signal \eqref{eq:geq3} is computed to obtain
\begin{equation}
Y(q) = G_{\text{eq}}(q) X(q),
\end{equation}
where $ G_{\text{eq}}(q)$ is the $M$-point DFT of the time invariant equivalent filter and $X(q)$ is the $M$-point DFT of the transmitted signal in \eqref{eq:CBFMT_TX}. In this case,  a 1-tap equalizer is sufficient to restore orthogonality. The equalizer output signal reads
\begin{align}
Y_\text{eq}(q) &= C_\text{tinv}(q) Y(q) = X(q),\\
C_\text{tinv}(q) &= 1/G_{\text{eq}}(q), \label{eq:zf_equalizer}
\end{align}
where $C_\text{tinv}(q)$ is the time-invariant equalizer coefficient. The equalizer in \eqref{eq:zf_equalizer} is known as Zero-Forcing (ZF) equalizer. This equalizer suffers from the problem of noise enhancement for very small $G_{\text{eq}}(q)$ coefficients. To solve this issue, the MMSE equalizer can be adopted \cite{CBFMT_EURASIP}:
\begin{equation}
C_\text{tinv, MMSE}(q) = \frac{G^*_{\text{eq}}(q)}{|G_{\text{eq}}(q)|^2 + \sigma^2_n/|G(q)|^2},
\end{equation}
where $\sigma^2_n$ is the noise variance.

After the equalization, filtering in the FD with the prototype analysis pulse $H(q)=G^*(q)$ is performed and finally an M-point IDFT is applied to obtain
\begin{align}
z^{(i)}(mN) = a^{(i)}(mN) + \eta^{(i)}(mN),
\end{align}
where $\eta^{(i)}(mN)$ are the background noise samples in the $i$-th sub-channel.

\subsection{Linear Time Variant Medium}
\label{Sec:Practical:TV}
When the equivalent filter is time variant, e.g., as in a mobile radio channel, the orthogonality is lost. If we still assume the CP to be longer than the channel impulse response duration, the $M$-point DFT of \eqref{eq:geq3} yields 
\begin{equation}
Y(q) = \sum_{p=0}^{M-1} X(p) H_2(p,q-p),
\end{equation}
where $H_2(p,q)$ is the two-dimensional $M$-point DFT of the equivalent filter coefficients $\alpha_s(n)$ in \eqref{eq:getv} and it is defined as
\begin{equation}
H_2(p,q) = \sum_{s=0}^{M-1} \sum_{n=0}^{M-1} \alpha_s(n) W_M^{sp+nq}.
\end{equation}
Details can be found in \cite{CBFMT_EURASIP}. Assuming to deploy a simple 1-tap equalizer also in this situation, the equalizer output signal can be written as 
\begin{align}
Y_\text{eq}(q) &= C_\text{tvar}(q) Y(q) \notag\\
&= X(q)+ C_\text{tvar}(q) \sum_{\substack{p=0\\p \neq q}}^{M-1} X(p) H_2(p,q-p)\\
& =  X(q) + I(q) \label{eq:interf}
\end{align}
where the MMSE equalizer coefficients $C_\text{tvar}(q)$ are computed as described in \cite{ref:CBFMT_EUSIPCO}.  This equalizer does not restore the orthogonality and some interference may remain, as shown in \eqref{eq:interf} with the additive term $I(q)$. In fact, if, as a final step, we perform matched filtering with the analysis pulse $G^*(q)$ and we compute the M-point IDFT, the output sub-channel signal will read
\begin{align}
z^{(i)}(mN) =& a^{(i)}(mN) \notag \\
&+ \sum_{\substack{\ell=0 \\ \ell \neq m}}^{L-1} a^{(i)}(\ell N) r_{\text{interf}}^{(i,i)}(mN-\ell N)  \notag \\
& + \sum_{\substack{k=0 \\ k \neq i}}^{K-1} \sum_{\ell=0}^{L-1} a^{(k)}(\ell N) r_{\text{interf}}^{(k,i)}(mN-\ell N)  \notag \\
& + \eta^{(i)}(mN), \label{eq:ztv}
\end{align}
where $r_{\text{interf}}^{(k,i)}(mN)$ are the interference coefficients and the convolutions in \eqref{eq:ztv} are cyclic with period $M$.

The level of such a residual interference depends on the specific prototype pulse used. In the following, we will design orthogonal CB-FMT pulses that allow to maximize the system capacity in the presence of a time-variant channel. 

Others equalization schemes can be adopted. As an example, in \cite{CBFMT_EURASIP} a multi-channel equalizer is considered and the coefficients are obtained by jointly considering all sub-channels, i.e., the inter-channel interference. This is more complex but it can improve performance. 

% *****************************************************************

% PULSE DESIGN

% *****************************************************************
\section{Pulse Design}
\label{Sec:Design}
In this section, we constructively exploit Thm. \ref{thm:Matrix} to design an orthogonal CB-FMT system. The design process is partitioned in three parts. First, a parameterization with angles is introduced to reduce the number of unknowns. Then, the non-linear system is introduced. The solution of this system allows to satisfy the orthogonality conditions. Finally, since the orthogonal pulses are infinite, two objective functions are introduced to determine an optimal pulse w.r.t. the selected metric.

\subsection{Parameterization with Angles}
\label{Sec:Design:Angles}
To achieve orthogonality, the column vectors of $\hat{\mathbf{H}}_{\text{ort},p}$ (with elements made by the pulse coefficients) must have unit norm, i.e., $||\mathbf{v}_p||^2/N=1$. This suggests to express the vector components in terms of non-linear combination of trigonometric functions. 

Firstly, we focus on real valued solutions, i.e., the pulse FD coefficients are real so that $\mathbf{v}_p \in \R^N$. For $N=2$, the square norm is simply given by $x^2+y^2=1$. A solution is given by $x=\cos \alpha, y=\sin \alpha$. This representation exploits the polar coordinates with unit radius. For $N=3$, the square norm $x^2+y^2+z^2=1$ represents a sphere in $\R^3$ with unit radius. The exploitation of the spherical coordinates allows to write the solution as $x=\cos \alpha, y = \sin \alpha \cos \beta, z=\sin \alpha \sin \beta$. In general, for $\R^N, N>3$, the hyper-spherical coordinates \cite{sommervilleGeometry} can be used. 

Secondly, the pulse FD coefficients can be complex so that the vectors $\mathbf{v}_p$ are complex valued. Thus, a phase factor can be added to every vector components and a representation with angles can be used
\begin{align}
\left\{ \mathbf{v}_p \right\}_0 &= \sqrt{N} \cos \left( \left\{ \boldsymbol{\theta}_p \right\}_0 \right)e^{j \left\{ \boldsymbol{\Phi}_p \right\}_0 }, \label{eq:angles3}\\
\left\{ \mathbf{v}_p \right\}_1 &= \sqrt{N} \sin \left( \left\{ \boldsymbol{\theta}_p \right\}_0 \right) \cos \left( \left\{ \boldsymbol{\theta}_p \right\}_1 \right)e^{j \left\{ \boldsymbol{\Phi}_p \right\}_1 },\\
\vdots \quad \; &= \qquad \qquad \; \vdots \notag\\
\underset{i \in \{2,\dots, N-2\}}{\left\{ \mathbf{v}_p \right\}_i} &= \sqrt{N} \left[ \prod_{s=0}^{i-1} \sin \left( \left\{ \boldsymbol{\theta}_p \right\}_s \right) \right] \cos \left( \left\{ \boldsymbol{\theta}_p \right\}_i \right)e^{j \left\{ \boldsymbol{\Phi}_p \right\}_i },  \\
\vdots \quad \; &= \qquad \qquad \; \vdots \notag\\
\left\{ \mathbf{v}_p \right\}_{N-1} &= \sqrt{N} \left[ \prod_{s=0}^{N-2} \sin \left( \left\{ \boldsymbol{\theta}_p \right\}_s \right) \right]e^{j \left\{ \boldsymbol{\Phi}_p \right\}_{N-1} }, \label{eq:angles4}
\end{align}
where the angles $\boldsymbol{\theta}_p$ and $\boldsymbol{\Phi}_p$ identify the amplitude and phase of the vector $\mathbf{v}_p \in \C^N$ components. The total number of angles (unknowns) is $L(2N-1)$ since we have $L$ vectors with $N$ components each.

\subsection{Non-Linear System}
\label{Sec:Design:System}
The angles representation assures that the condition \eqref{eq:ort_FD_1} is satisfied for any value of the angles so that no ISI is present. To complete the orthogonal pulse design, we have to apply Thm. \ref{thm:Matrix} which can be explicitly written to obtain the following non-linear system of equations
\begin{equation}
\begin{cases} 
\left\{ \mathbf{H}_{\text{ort},p} \right\}_{*,0}^H \cdot \left\{ \mathbf{H}_{\text{ort},p} \right\}_{*,i}=0 & i \in \{1,\dots, K-1\}\\
\left\{ \mathbf{H}_{\text{ort},p} \right\}_{*,1}^H \cdot \left\{ \mathbf{H}_{\text{ort},p} \right\}_{*,i}=0 & i \in \{2,\dots, K-1\}\\
\qquad \qquad \quad \vdots & \qquad \qquad \vdots\\
\left\{ \mathbf{H}_{\text{ort},p} \right\}_{*,K-3}^H \cdot \left\{ \mathbf{H}_{\text{ort},p} \right\}_{*,i}=0 & i \in \{K-2, K-1\}\\
\left\{ \mathbf{H}_{\text{ort},p} \right\}_{*,K-2}^H \cdot \left\{ \mathbf{H}_{\text{ort},p} \right\}_{*,K-1}=0 & 
\end{cases}. \label{eq:orthsystem}
\end{equation}
From equation \eqref{eq:orthsystem}, we should note that the system comprises $\sum_{k=0}^{K-1} k = K(K-1)/2$ equations of $L (2N-1)/N_s$ unknown angles for complex pulse solutions and $L (N-1)/N_s$ for real pulse solutions. The problem comprises $N_s$ independent systems, thus the total number of equations is equal to $N_s K(K-1)/2$ that are split in $N_s$ distinct sub-systems.

In general, the components of a vector $\mathbf{v}_p$ are represented with a set of $2N-1$ angles. In fact, from \eqref{eq:angles3}-\eqref{eq:angles4}, the components are represented by $N-1$ angles for the amplitude ($\boldsymbol{\theta}_p$) and $N$ angles for the phases ($\boldsymbol{\Phi}_p$). The angles number can be reduced if a band limited constraint to the pulse is set. We denote with $Q_2$ the number of non-zero DFT coefficients of the pulse, i.e., $G(i)=0$ for $Q_2 < i \leq M$. In this case, some components of the vectors $\mathbf{v}_p$ are equal to zero and the angles representation -- always from \eqref{eq:angles3}-\eqref{eq:angles4} -- is used only for the non-zero components. If we assume that $\mathbf{v}_p$ has only $N_2 = M/Q_2 < N$ non-zero components than we obtain a set of $2N_2-1$ angles, $N_2-1$ for the amplitudes and $N_2$ for the phases.

When $Q_2 = Q$, the prototype pulse is confined into the sub-channel and it does not overlap the adjacent sub-channels. In this case, the non-linear system \eqref{eq:orthsystem} is automatically satisfied. Thus, there are an infinite set of angles and phases that satisfy the equations \eqref{eq:angles3}--\eqref{eq:angles4}. Two cases can be distinguished:
\begin{enumerate}
\item \textbf{Over-sampled case ($K<N$)}. In this case, the vector $\mathbf{v}_p$ has at most $\lceil Q/L \rceil$ non-zero components. Thus, the angles sets $\boldsymbol{\theta}_p$ and $\boldsymbol{\Phi}_p$ have at most $2 \lceil Q/L \rceil-1$ angles.
\item \textbf{Critically-sampled case ($K=N$)}. In this particular case $Q=L$ and the vectors $\mathbf{v}_p$ have only one non-zero coefficient. Thus, the only valid solution is written as
\begin{equation}
 \begin{cases}
 G(p) = e^{j \left\{ \boldsymbol{\Phi}_p \right\}_0 } & \text{for } p \in \{0,\dots,L-1\} \\
 \quad 0 & \text{otherwise}
 \end{cases}. \label{eq:critically_rect}
\end{equation}
Condition \eqref{eq:critically_rect} shows that the only possible solution corresponds to a pulse whose frequency response is the rectangular window.
\end{enumerate}

% *****************************************************************

% PRACTICAL PULSE DESIGN

% *****************************************************************
\subsection{Practical Design}
\label{Sec:Design:Practical}
As shown in Thm. \ref{thm:inf}, there is an infinite number of solutions to the orthogonal pulse design problem. To complete the design procedure, we consider it under the goal of maximizing a certain objective function. Two objective functions and metrics have been identified:
\begin{enumerate}
\item the maximum in-band-to-out-of-band pulse energy, referred to as IBOB energy ratio;
\item the maximum achievable rate.
\end{enumerate}

\subsubsection{In-band-to-out-of-band Energy Metric} \label{Sec:Design:Practical:Ratio}
To compute the IBOB energy ratio, we consider the discrete-time Fourier transform (DTFT) of the prototype pulse so that the objective function is defined as
\begin{align}
f_1(\boldsymbol{\theta}, \boldsymbol{\Phi}) &= \frac{\int_0^B \left| S(f,\boldsymbol{\theta},\boldsymbol{\Phi})\right|^2 \mbox{d}f}{\int_{-\infty}^{+\infty} \left| S(f,\boldsymbol{\theta},\boldsymbol{\Phi})\right|^2 \mbox{d}f-\int_0^B \left| S(f,\boldsymbol{\theta},\boldsymbol{\Phi})\right|^2 \mbox{d}f}, \label{eq:f1}\\
\boldsymbol{\theta} &= \left[ \boldsymbol{\theta}_0,\dots, \boldsymbol{\theta}_{L-1} \right], \label{eq:f1_a}\\
\boldsymbol{\Phi} &= \left[ \boldsymbol{\Phi}_0,\dots, \boldsymbol{\Phi}_{L-1} \right], \label{eq:f1_b}
\end{align}
where $S(f,\boldsymbol{\theta},\boldsymbol{\Phi})$ is the DTFT of the prototype pulse, $B=1/KT$ is the sub-channel bandwidth and $\boldsymbol{\theta}$, $\boldsymbol{\Phi}$ are two $(2N-1) \times L$ matrices that contain all the angle parameters of equations \eqref{eq:angles3} to \eqref{eq:angles4}. 

\subsubsection{Maximum Achievable Rate Metric} \label{Sec:Design:Practical:Capacity}
Under Gaussian additive background noise, the maximum achievable rate (capacity) can be computed as follows  
\begin{align}
f_2&(\boldsymbol{\theta}, \boldsymbol{\Phi})= \notag \\
&= \frac{1}{(M+\mu)T} \sum_{k=0}^{K-1}\sum_{\ell=0}^{L-1} \log_2 \left( (1+\mbox{SINR}^{(k)}(\ell, \boldsymbol{\theta}, \boldsymbol{\Phi}) \right),\label{eq:f2}
\end{align}
where $\mbox{SINR}^{(k)}(\ell, \boldsymbol{\theta}, \boldsymbol{\Phi})$ represents the signal-to-noise-plus-interference experienced by the $\ell$-th element of the data block transmitted in the $k$-th sub-channel (see also \eqref{eq:ztv}). The SINR depends on the specific channel realization and on the prototype pulse. 

If we consider a random time variant fading channel, an optimal orthogonal pulse will be found for each specific channel realization, which would require to adapt the filter bank to the channel conditions. In order to maintain a unique FB, we consider to select a unique prototype pulse and in particular the one that maximizes the average capacity. In other words, we proceed as follows: a) for each channel realization we design the optimal capacity wise pulse; b) for all pulses we determine the average capacity; c) we select the pulse that yields the highest average capacity. Clearly the procedure is applicable once a given random channel model is available as discussed in the numerical results section. 

\subsection{Summary of the Design Algorithm} \label{Sec:Design:Practical:Algorithm}      
The design procedure described in the previous sections can be summarized as follows.
\begin{enumerate}
\item Exploiting \eqref{eq:vp}, the $M$ unknowns (the pulse DFT coefficients) are partitioned in $L$ vectors $\mathbf{v}_p$.

\item Exploiting \eqref{eq:angles3}-\eqref{eq:angles4}, the vectors $\mathbf{v}_p$ are expressed in term of angles, $\boldsymbol{\theta}$ and $\boldsymbol{\Phi}$.

\item Exploiting \eqref{eq:orthsystem}, $N_s$ independent non-linear systems of equations are generated.

\item The metrics $f_1(\boldsymbol{\theta}, \boldsymbol{\Phi})$ and $f_2(\boldsymbol{\theta}, \boldsymbol{\Phi})$ are optimized under the orthogonality constraint, namely the non-linear systems of equations.
\end{enumerate}

% *****************************************************************

% NUMERICAL RESULTS

% *****************************************************************
\section{Numerical Results}
\label{Sec:Results}
\begin{figure}[t]
\centering
\includegraphics[width=\figureWidthFlowchart]{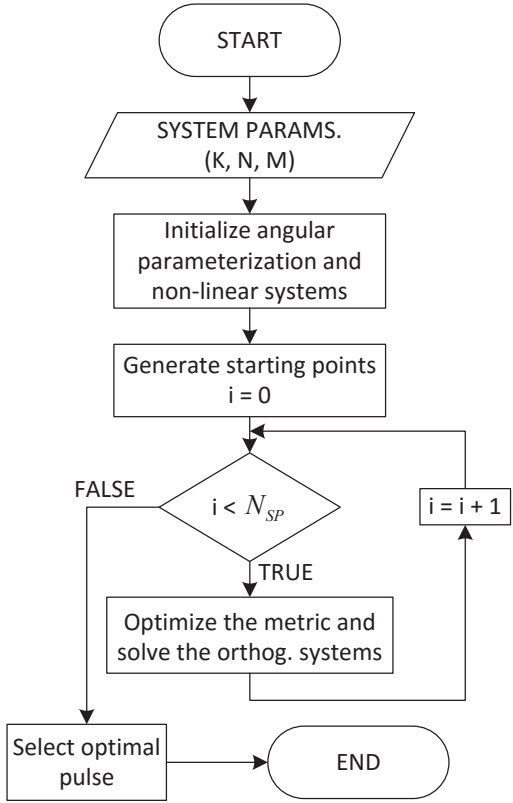}
   \caption{Flowchart of the design procedure.}
   \label{fig:Algorithm}
\end{figure}
A numerical approach has been followed to search for optimal orthogonal pulses that maximize the objective functions \eqref{eq:f1} and \eqref{eq:f2}. The optimization is a function maximization problem under non-linear constraints given by the orthogonality conditions \eqref{eq:ort_FD_1} and \eqref{eq:ort_FD_2}. The design is performed in the frequency domain and the angles representation, described in Sec. \ref{Sec:Design:Angles}, is adopted. The optimization process is performed exploiting the interior point method \cite{interiorMATLAB1,interiorMATLAB2,interiorMATLAB3} and repeated several times for randomly selected starting points. In detail, we generate $N_\text{SP}$ starting points, each corresponding to certain pulse coefficients obtained by randomly drawing angles in \eqref{eq:angles3}-\eqref{eq:angles4}. Then, for each starting point the optimization process (using the interior point method) is performed so that we obtain a set of solutions. Then, we select the solution in this set for which the benefit function (metric) is maximum. In this work, $N_\text{SP} = 500$.  The procedure is sketched in the flowchart of Fig.~\ref{fig:Algorithm}.

The design considers band limited pulses with three different $K/N$ ratio values: the critically sampled case ($K=N$), the maximum rate over-sampled case ($K, N=K+1$) and an over-sampled case with $K/N=2/3$. In detail, we have chosen $K \in \{8,10,12\}$ and the corresponding $N$ for  the same filter length s.t. $M > 300$. 

For the IBOB energy ratio metric, the considered pulses are real and even, so that the DFT coefficients are also real and the angles representation in \eqref{eq:angles3}-\eqref{eq:angles4} uses null phases. The design of a complex and symmetric pulse does not yield any improvement. 

For the capacity metric, we consider real and complex pulse FD responses. The former pulses have real and even DFT coefficients and consequently, real and even impulse responses. The latter pulses have complex DFT coefficients with Hermitian symmetry and, in general, no symmetry in time domain. Numerical results show that the complex solution outperforms the real pulse solution. 

The performance of the optimal pulse is compared with the one offered by a root-raised-cosine (RRC) baseline pulse. The pulse DFT coefficients are obtained sampling the frequency response of a RRC filter  \cite{CBFMT_ICC}. The roll-off is chosen so that there is no interference between adjacent sub-channels. In detail, we have $\beta_{max} = (Q-L)/L$. In the critically sampled case,  $\beta_{max}=0$. Thus, the RRC pulse coincides with the rectangular window in frequency domain. 

\renewcommand{\arraystretch}{1}
\begin{table*}[tb]
  \caption{IBOB energy ratios and maximum achievable rate for the baseline and the optimal pulses.}
  \label{tab:pratio_ort}
  \centering
\resizebox{4.75in}{!}{
  \begin{tabular}{|c|c|c|ccc|ccc|}
  \hline
  \multicolumn{3}{ |c| }{\multirow{3}{*}{\parbox{1.5cm}{\centering System parameters}}} & \multicolumn{6}{ c| }{Metrics values} \\ \cline{4-6} \cline{5-9}
  \multicolumn{3}{|c|  }{} &  \multicolumn{3}{ c| }{IBOB energy ratio [dB]}  & \multicolumn{3}{ c| }{Achievable rate [Mbps]} \\ \cline{4-9}
  \multicolumn{3}{|c|  }{} &  & Max & Max & & Max & Max \\ \cline{1-3}
  K & N & M & RRC & en. ratio pulse & capacity pulse & RRC & en. ratio pulse & capacity pulse \\ \hline \hline
  \multirow{3}{*}{8} & 8 & \multirow{3}{*}{360} & 20.62 & 20.62 & 18.02 & 96.57 & 96.57 & 96.57\\
  & 9 & & 45.33 & 102.17 & 40.68 & 98.93 & 97.89 & 102.50\\
  & 12 & & 56.88 & 127.11 & 43.89 & 92.21 & 74.27 & 100.44\\ \hline
\multirow{3}{*}{10} & 10 & \multirow{3}{*}{330} & 19.24 & 19.24 & 16.62 & 101.52 & 101.52 & 103.07\\
  & 11 & & 34.15 & 56.79 & 42.44 & 100.11 & 84.47 & 110.70\\
  & 15 & & 52.59 & 120.39 & 41.77 & 100.30 & 79.29 & 107.71\\ \hline
  \multirow{3}{*}{12} & 12 & \multirow{3}{*}{468} & 19.98 & 19.98 & 17.36 & 92.55 & 92.55 & 92.58\\
  & 13 & & 34.79 & 58.00 & 38.27 & 90.61 & 74.98 & 104.44\\
  & 18 & & 54.94 & 114.79 & 45.65 & 96.75 & 69.91 & 105.24\\ \hline
  \end{tabular}
  }
\end{table*}

\subsection{Maximum In-Band to Out-Band Energy Ratio} \label{Sec:Results:Ratio}
In Tab. \ref{tab:pratio_ort}, we summarize the results. The IBOB energy ratio achieved with the optimal pulse is shown for every considered set of parameters. The results show that the optimal pulse significantly improves the IBOB energy ratio w.r.t. the RRC pulse. In the critically sampled case, the RRC pulse is a rectangular pulse in frequency domain and it is the optimal solution. In the over-sampled case, the metric is improved w.r.t. the critically sampled case because $Q > L$ introduces some redundancy adding some degrees of freedom in the prototype pulse design. 

As an example, the frequency response for the $K=8, N=12$ case is shown in Fig. \ref{fig:pulse_example}. In the top-left plot we report the DFT of the pulse while the DTFT is shown in the bottom-left plot. Similar plots can be obtained for the other parameters. The pulses designed according to the IBOB energy ratio metric exhibit much higher spectrum confinement than the RRC pulse. The capacity performance, shown in the right plot of Fig. \ref{fig:pulse_example}, is discussed in the next section. 

 \begin{figure}[t]
\centering
\includegraphics[width=\figureWidthFigure]{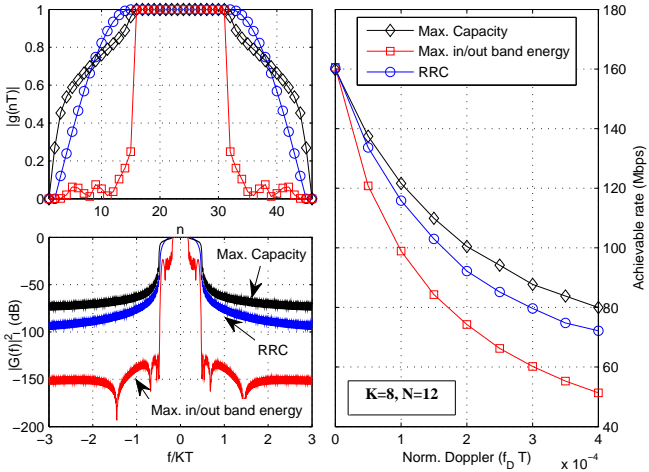}
   \caption{Prototype pulse example for $K=8, N=12, M=360$. 
   %Optimal pulses for the in-band-to-out-band and achievable rate metrics are considered and they are compared with the RRC pulse. 
   On top-left, the plot shows the amplitude of the prototype pulses DFT. On bottom-left, the frequency response of the pulses is shown. On right, the plot shows the maximum achievable rate as a function of the normalized Doppler frequency.}
   \label{fig:pulse_example}
\end{figure} 
\subsection{Maximum Achievable Rate} \label{Sec:Results:Capacity}
To design the prototype pulse that maximizes the achievable rate, we follow the steps reported in Sec. \ref{Sec:Design}. A time-variant and time-dispersive fading wireless channel with the Clarke's isotropic scattering model is assumed \cite{ref:Stuber_MobileComm}. Therefore, the channel coefficients are assumed to be independent zero-mean complex Gaussian random variables with correlation 
\begin{equation}
E\left[ \alpha_\ell^*(m) \alpha_{\ell'}(m+n) \right] = \Omega_{\ell} J_0(2 \pi f_D n )\delta(\ell-\ell'), \label{eq:clarke}
\end{equation}
where $f_D$ and $J_0(\cdot)$ are the maximum Doppler frequency and the zero-order Bessel function of the first kind, respectively. The power decay profile is exponential, i.e., $\Omega_{\ell} = \Omega_0\;e^{-\ell/\gamma }$, where $\Omega_0$ is a normalization constant to obtain unit average power, and $\gamma$ is the normalized, w.r.t. the sampling period, delay spread. The normalized delay spread is equal to $\gamma=2$ yielding a channel of duration $5$ samples. The CP has length $8$ coefficients and the sampling frequency is $1/T = 20 \, \text{MHz}$. The optimal capacity pulses have been designed assuming the normalized Doppler frequency equal to $2 \times 10^{-4}$. In all cases, the SNR is set equal to $40$ dB. A 1-tap MMSE equalizer is adopted, as described in Sec.~\ref{Sec:Practical:TV}. 

In Tab. \ref{tab:pratio_ort}, the average capacity is reported for the normalized Doppler frequency equal to $2 \times 10^{-4}$. In the critically sampled case, numerical results show that the rectangular window is the optimal solution for the capacity metric too. Similarly, to the IBOB energy ratio metric optimization case, no gains have been found by considering complex valued rectangular windows.

In the right plot of Fig. \ref{fig:pulse_example}, the average capacity as a function of the normalized Doppler frequency is shown for $K=8, N=12$. The optimal pulse for the IBOB energy ratio performs worse than the RRC pulse. Vice versa, the optimal capacity pulse increases performance. It should be noted that the spectrum of the optimal capacity wise pulse is less confined than that of the RRC pulse as the left plots of Fig. \ref{fig:pulse_example} show.  

The achievable rate is related to the SINR and the ratio $K/N$. The SINR increases when the rate $K/N$ decreases. On the contrary, as $K/N$ decreases the transmission rate decreases. To improve the achievable rate a trade-off is necessary. As an example, Fig. \ref{fig:Capacity-doppler} shows the achievable rate for $K=10$ and three different $K/N$ rates ($1$, $10/11$ and $2/3$). For the static channel (no Doppler), the RRC and the optimal pulse exhibit the same performance. When Doppler increases, the achievable rate decreases for all pulses due to the increase in interference. However, the optimal pulses show all higher capacity than the RRC pulse. Depending on the Doppler value, an optimal $K/N$ ratio can be identified. For normalized Doppler values below $0.5 \times 10^{-4}$, the CS solution is the best. For normalized Doppler frequencies between $0.5 \times 10^{-4}$ and $2 \times 10^{-4}$ the $K=10, N=11$ case offers the highest capacity, while for higher values of Doppler the $K=10, N=15$ case is the best. 

\renewcommand{\arraystretch}{1}
\begin{table*}[t]
  \caption{IBOB energy ratios and maximum achievable rate for the extended  pulses.}
  \label{tab:pratio_ort_ext}
  \centering
\resizebox{4.75in}{!}{
  \begin{tabular}{|c|c|c|ccc|ccc|}
  \hline
  \multicolumn{3}{ |c| }{\multirow{3}{*}{\parbox{1.5cm}{\centering Extended parameters ($\alpha_1 = 3$)}}} & \multicolumn{6}{ c| }{Metrics values} \\ \cline{4-6} \cline{5-9}
  \multicolumn{3}{|c|  }{} &  \multicolumn{3}{ c| }{IBOB energy ratio [dB]}  & \multicolumn{3}{ c| }{Achievable rate [Mbps]} \\ \cline{4-9}
  \multicolumn{3}{|c|  }{} &  & Max & Max & & Max & Max \\ \cline{1-3}
  K & N & M & RRC & en. ratio pulse & capacity pulse & RRC & en. ratio pulse & capacity pulse \\ \hline \hline
  \multirow{2}{*}{24} & 24 & \multirow{2}{*}{1080} & 20.62 & 20.62 & 18.01 & 77.93 & 77.93 & 78.61\\
  & 36 & & 59.93 & 130.00 & 46.74 & 101.01 & 54.30 & 108.70\\ \hline
\multirow{2}{*}{30} & 30 & \multirow{2}{*}{990} & 19.23 & 19.23 & 16.61 & 76.75 & 76.75 & 77.19\\
  & 45 & & 55.64 & 123.38 & 44.54 & 102.25 & 54.86 & 110.40\\ \hline
  \multirow{2}{*}{36} & 36 & \multirow{2}{*}{1404} & 19.98 & 19.98 & 17.41 & 78.12 & 78.12 & 78.54\\
  & 54 & & 57.97 & 117.81 & 48.71 & 105.07 & 50.14 & 108.38\\ \hline
  \end{tabular}
  }
\end{table*}

\begin{figure}[t]
\centering
\includegraphics[width=\figureWidthFigure]{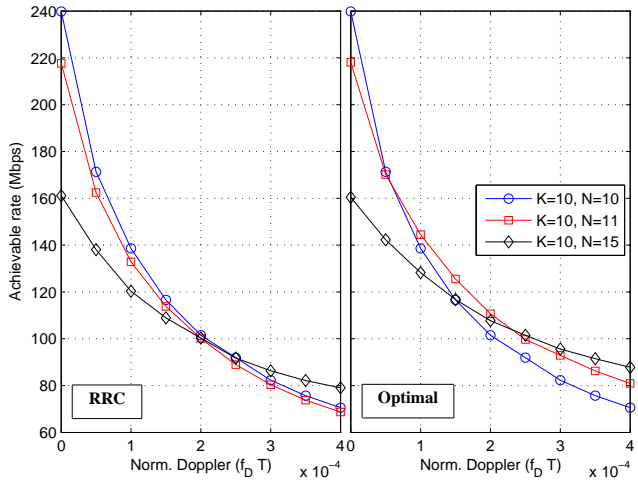}
   \caption{Achievable rate as a function of the normalized Doppler frequency for $K=10$ and $N \in \{10,11,15\}$. On the left, the achievable rate of the RRC pulse. On the right, the performances of the optimal pulse.}
   \label{fig:Capacity-doppler}
\end{figure}

\subsection{Performances with Extended Pulses} \label{Sec:Results:Capacity:extension}
In Sec.~\ref{Sec:Orthogonality:Params_variation}, we have discussed how to build an orthogonal pulse from a mother pulse designed for a set of parameters $(K, N, M)$ to a new set of parameters $(\alpha_1 K, \alpha_1 N,\alpha_1 M)$ or $(\alpha_2 K, \alpha_2 N, M)$.

As an example, Fig.~\ref{fig:Capacity-subopt} shows the achievable rate as a function of the normalized Doppler frequency for the pulses with parameters $(24,36,360)$ and $(24,36,1080)$ obtained from the mother optimal pulse designed for the parameters $(8,12,360)$. On the left plot, the prototype pulse is obtained by sampling in frequency domain by a factor $3$. On the right plot, the prototype pulse is obtained by zero padding the spectrum: the optimal pulse is extended to a pulse with $3 M$ coefficients.

In both cases, the new prototype pulses are orthogonal and exhibit performance better than the RRC pulse. Indeed, to obtain a capacity optimal pulse a search has to be conducted for every new set of parameters. However, the figure shows that these sub-optimal pulses offer a system capacity that is close to the optimal ones.  

In Tab.~\ref{tab:pratio_ort_ext}, the IBOB energy ratio and the average capacity are reported for the extended pulses. The extension described in Thm.~\ref{thm:params_1} is used, with $\alpha_1=3$.

\begin{figure}[t]
\centering
\includegraphics[width=\figureWidthFigure]{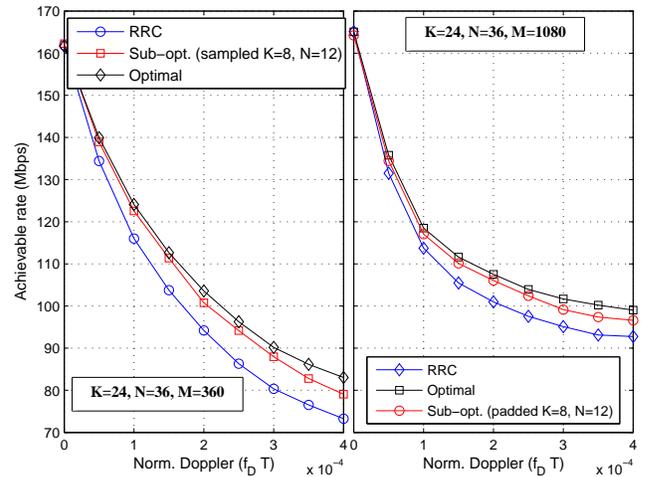}
   \caption{Achievable rate as a function of the normalized Doppler frequency for $K=24, N=36$. The orthogonal pulses are derived following the theorems reported in Sec. \ref{Sec:Orthogonality:Params_variation}. On the left, the sub-optimal pulse is derived by FD sampling the $K=8, N=12$ optimal pulse (constant filter length). On the right, the sub-optimal pulse is derived with a zero padding of the $K=8, N=12$ optimal pulse (filter length variation).}
   \label{fig:Capacity-subopt}
\end{figure}

\subsection{Remarks} \label{Sec:Results:Capacity:remarks}
Typically, the IBOB energy ratio metric is used in the filter bank design literature. This is because the frequency confinement of the sub-channels is the design criterion. The capacity metric is less, or even not at all, used. The latter is more appropriate when the goal is the design of a communication scheme with high spectral efficiency. The numerical results (see Tab.~\ref{tab:pratio_ort} and Tab.~\ref{tab:pratio_ort_ext}). allow also to compare the capacity offered by the filter bank designed with the two metrics, and of course, the latter metric offers always higher capacity.

% *****************************************************************

% CONCLUSIONS

% *****************************************************************
\section{Conclusions}
\label{Sec:Conclusions}
In this paper, the orthogonal design of CB-FMT has been analyzed. Both time domain and frequency domain design criteria have been reported. It has been shown that the orthogonality conditions can be written in matrix form and they translate in a set of equations that the prototype pulse DFT coefficients must fulfill. A subset of these equations are redundant and can be removed allowing to use a system with a reduced number of unique equations. The search of optimal pulses has then been conducted exploiting a parameterization with hyper-spherical coordinates under an objective maximization function: either the IBOB energy ratio or the maximum system capacity. Numerical examples have been reported and show that better performance than the RRC pulse can be obtained in static and time variant frequency selective fading channels. Finally, it has been shown that from a mother pulse designed for a set of parameters $(K$, $N$, $M)$ it is simple to obtain an orthogonal solution with other sets of parameters, e.g. for an increase in the number of sub-channels with increased or constant pulse length. These pulses still offer better capacity performance than the RRC pulse in time-variant fading channels and it is close to the optimally designed pulses.

\appendices
\section{Proofs of Theorem \ref{thm:Matrix} and Corollary \ref{cor:critically}}
\label{Appx:MatrixForm}

\subsection{Proof of the Theorem}
\label{Appx:MatrixForm:Thm}

\subsubsection{\textbf{$\hat{\mathbf{H}}_p$ properties}}
\label{Appx:MatrixForm:Thm:Properties}
Before reporting the theorem proof, two proprieties of the $\hat{\mathbf{H}}_p$ matrices are introduced.
\let\oldlabelenumi=\labelenumi
\renewcommand{\labelenumi}{(\alph{enumi})}
\begin{enumerate}
 \item Each matrix $\hat{\mathbf{H}}_p$ involves $L/\gcd(Q,L)$ distinct vectors.
 
 A vector $\mathbf{v}_{p_2}$ belongs to $\hat{\mathbf{H}}_p$ if the following relation (derived from \eqref{eq:cpk}) is satisfied:
\begin{align}
p_2 &= p + kQ + aL, \label{eq:diofand1}\\
p_2  \in & \{0, \dots, L-1\},  a \in \Z. \notag
\end{align}
Equation \eqref{eq:diofand1} can be rewritten as
\begin{equation}
kQ+aL = \Delta_p, \label{eq:diofand2}
\end{equation}
where $\Delta_p = p_2-p$. Equation \eqref{eq:diofand2} is a linear diophantine equation and it has integer solution if $\Delta_p$ is a multiple of $\gcd(Q,L)$ \cite{mordell1969diophantine}. Thus, the matrix $\hat{\mathbf{H}}_p$ contains only $L/\gcd(Q,L)$ distinct vectors, i.e., the ones identified by the indexes
\begin{equation}
p_2 = p + b \gcd(Q,L), \label{eq:diofand3}
\end{equation}
where $b \in \Z$ is chosen s.t. $p_2 \in \{0, \dots, L-1 \}$.
 
 \item Given $N_s = \gcd(Q,L)$, the $\hat{\mathbf{H}}_p$ matrices with index $p \in \{ N_s, \dots, L-1\}$ are obtained by a swap of the rows and columns of the $\hat{\mathbf{H}}_p$ matrices with index $p \in \{0, \dots, N_s-1\}$. 
 
We start from the index relations
\begin{align}
c_{(p+(k_2 Q)_L,k)} &= (p+(k_2)_L(Q)_L+(k)_L(Q)_L)_L \notag\\
& = (p+(k+k_2)_L (Q)_L)_L \notag\\
&= c_{(p,k+k_2)}, \label{eq:cpk2}\\
d_{(p+(k_2 Q)_L,k)} &= \frac{p+kQ+(k_2 Q)_L- \left(p+\left(k+k_2\right) Q \right)_L}{L} \notag\\
&= d_{(p,k+k_2)} + \frac{(k_2Q)_L- k_2 Q}{L}, \label{eq:dpk2}
\end{align}
and we apply them to the matrices $\mathbf{H}_{\text{ort},p}$ and $\mathbf{H}_{\text{ort},p+(k_2 Q)_L}$. We can deduce the following properties:
\begin{itemize}
\item the two matrices are made of the same column vectors. The $k$-th column vector in $\mathbf{H}_{\text{ort},p}$ corresponds to the $(k+k_2)_L$-th column vector of the matrix $\mathbf{H}_{\text{ort},p+(k_2 Q)_L}$ (deduced from \eqref{eq:cpk2});
\item the column vectors of the matrix $\mathbf{H}_{\text{ort},p}$ are cyclically shifted by a factor $\left[ (k_2Q)_L - k_2 Q \right]/L$ in the matrix $\mathbf{H}_{\text{ort},p+(k_2 Q)_L}$ (deduced from \eqref{eq:dpk2});
\item the term $p+(k_2 Q)_L$ can be rewritten as in \eqref{eq:diofand3}. 
\end{itemize}
Thus, the matrices $\mathbf{H}_{\text{ort},p+b \gcd(Q, L)}, b \neq 0$ are redundant because they are obtained by a swap of the rows and columns of the matrix $\mathbf{H}_{\text{ort},p}$ \label{Hp_properties_2}
 \end{enumerate} 
\let\labelenumi=\oldlabelenumi

\subsubsection{\textbf{Proof}}
Given a matrix \eqref{eq:Hort2} for a certain index $p$, if all the columns are orthonormal, the inner product \eqref{eq:inner_product} is zero for every $k$. Thus,  \eqref{eq:ort_FD_2} is satisfied. 

Now, we focus on \eqref{eq:ort_FD_1}. This equation can be seen as the square Euclidean norm of the vector $\tau^{d_{(p,k)}} \left\{ \mathbf{v}_{c_{(p,k)}} \right\}$. If all the columns are orthonormal, so that the norm of the matrix \eqref{eq:Hort} is equal to $N$, \eqref{eq:ort_FD_1} will be satisfied. 

Finally --- exploiting the properties (b)reported in the previous section, it follows that the orthogonality conditions have to be fulfilled by only the matrices of index $p \in \{0, \dots, N_s-1\}$ because the others matrices are redundant.

\subsection{Proof of Corollary \ref{cor:critically}}
\label{Appx:MatrixForm:Corollary}
When the system is critically sampled, $Q=L$ so that \eqref{eq:cpk} and \eqref{eq:dpk} can be rewritten as 
\begin{align}
c_{(p,k)} &= (p+kL)_L = p, \label{eq:cpk3}\\
d_{(p,k)} &= \frac{p+kL-(p+kL)_L}{L} = k. \label{eq:dpk3}
\end{align}
Equation \eqref{eq:cpk3} shows that all the columns of $\mathbf{H}_{\text{ort},p}$ are made with the elements of the vector $\mathbf{v}_p$. Equation \eqref{eq:dpk3} shows that the $k$-th column is cyclically shifted by a factor $k$. Thus, the matrix is circulant.

A property of an orthogonal matrix is that all its eigenvalues, denoted as $\lambda_n$, are equal, in modulus, to $1$ \cite{golub1996matrix}. Furthermore, the $n$-th eigenvalue of a circulant matrix is obtained as \cite{davis1994circulant}
\begin{align}
\lambda_n = \sum_{k=0}^{N-1} \left\{ \mathbf{v}_p \right\}_k e^{-j 2 \pi nk/N}. \label{eq:eigen_circulant}
\end{align}
Equation \eqref{eq:eigen_circulant} can be seen as an $N$-point DFT. Thus, the eigenvalue vector $\Lambda = \left[\lambda_0, \dots, \lambda_{N-1} \right]$ is given by
\begin{equation}
\Lambda = \mathbf{F}_N \mathbf{v}_p.
\end{equation}  

\section{Proof of Theorem \ref{thm:geq}}
\label{Appx:geq}
Under the theorem hypothesis, replacing \eqref{eq:gk2} into \eqref{eq:CBFMT_TX} allows us to rewrite the transmitted signal as
\begin{equation}
 x(n) = \sum_{k=0}^{K-1} \sum_{\ell=0}^{L-1} a^{(k)}(\ell N) g^{(k)}(n-\ell N). \label{eq:CBFMT_TX_2}
 \end{equation} 
The block of coefficients at the receiver input in \eqref{eq:geq2}, after CP discard, is given by
\begin{align}
y(n) &= \sum_{s=0}^{P-1} g_\text{eq}(s) x(n-s) \notag \\
&=  \sum_{s=0}^{P-1} g_\text{eq}(s) \sum_{k=0}^{K-1} \sum_{\ell=0}^{L-1} a^{(k)}(\ell N) g^{(k)}(n-s-\ell N) \notag \\
&= \sum_{k=0}^{K-1} \sum_{\ell=0}^{L-1} a^{(k)}(\ell N) g_1^{(k)}(n-\ell N),\\
n \in & \{0,\dots, M-1\} \notag
\end{align}
where $g_1^{(k)}(n) = g^{(k)} \otimes  g_\text{eq}(n)$. Thus, the matched receiver filter for the $i$-th sub-channel is given by $h^{(i)}(n)= \left( g_\text{eq,-} \otimes g^{(i)}_-(n) \right)^*$. Condition \eqref{eq:rgh_TD} becomes
\begin{equation}
r_\text{eq}^{(k,i)}(mN) = r^{(k,i)} \otimes g_\text{eq}  \otimes g_\text{eq,-}^*(mN). \label{eq:rgheq_TD}
\end{equation}
The prototype pulse $g(n)$ is designed to be orthogonal and $r^{(k,i)}(nN)$ is equal to the Kronecker delta. Thus, condition \eqref{eq:rgheq_TD} simply becomes
\begin{equation}
r_\text{eq}^{(i,i)}(mN) = g_\text{eq} \otimes g_\text{eq,-}^*(mN). \label{eq:rgheq_TD_2}
\end{equation}
Finally, if \eqref{eq:rgheq_TD_2} is equal to $\delta_m$, i.e., the equivalent filter impulse response is orthogonal to its cyclic shifts (with period $M$) of multiples of $N$, then orthogonality will not be lost.

\section{Proof of Theorem \ref{thm:inf}}
\label{Appx:inf}
First, we consider the critically sampled case $K=N$. Corollary \ref{cor:critically} shows that every vector $\mathbf{v}_p$ s.t. 
\begin{equation}
\left| \left\{ \mathbf{F}_N \mathbf{v}_p \right\}_i \right| = 1 \label{eq:critically}
\end{equation}
is a valid solution. Equation \eqref{eq:critically} can be rewritten as
\begin{equation}
\left\{ \mathbf{F}_N \mathbf{v}_p \right\}_i  = e^{j \phi_i}. \label{eq:critically2}
\end{equation}
Thus, every set $\{\phi_0,\dots,\phi_{N-1}\}$ gives a valid solution and, finally, the number of solutions is infinite.

Now, we focus on the general case $K < N$. The matrix columns of $\mathbf{H}_{\text{ort},p}$ are composed of $L/N_s$ distinct vectors and theirs cyclically shifted version by a factor $LQ/N_s$. In detail, from \eqref{eq:cpk} and \eqref{eq:dpk}, we can see that
\begin{align}
c_{\left(p,k+b L/N_s \right)} &= \left(p+kQ+ L \frac{Q}{\gcd(Q,L)} \right) = c_{(p,k)},\\
d_{(p,k+b L/N_s)} &= \frac{p+kQ+ b LQ/N_s - c_{\left(p,k+L/N_s \right)} }{L} \notag\\
&= d_{(p,k)} + b \frac{LQ}{N_s},  
\end{align}
where $b \in \Z$ is chosen s.t. $k+b L/N_s \in \{0, \dots, K-1 \}$.

Given a $N \times N$ circulant orthogonal matrix, obtained from \eqref{eq:critically2}, we can obtain a valid $\mathbf{H}_{\text{ort},p}$ matrix by simply dropping $N-K$ columns. Given that there are infinite circulant orthogonal matrices, there are infinite solutions in the case $K<N$ too. 

\section{Proof of Theorem \ref{thm:params_1}}
\label{Appx:params_1}
To prove the orthogonality, we start from the no-ICI condition in \eqref{eq:ort_FD_3} rewritten as
\begin{align}
 \frac{1}{\alpha_1 N} \sum_{s=0}^{\alpha_1 N-1} G_{\alpha_1}(p+sL)& G_{\alpha_1}^*(p+sL+kQ) =0, \label{eq:ort_proof_ici_alpha}\\
      \forall p  \in \{0, \dots,L-1\},& \quad \forall k \in \{1,\dots,\alpha K-1\}. \notag
\end{align}
From \eqref{eq:ort_proof_ici_alpha}, $G_{\alpha_1}^*(p+sL+kQ)$ defined in \eqref{eq:Galpha_1} is null $\forall k \in \{1,\dots,\alpha_1 K-1\}$ due to the frequency confinement. Thus, the criterion is satisfied. We now focus on the no-ISI condition in \eqref{eq:ort_FD_1} rewritten as
\begin{align}
  \frac{1}{\alpha_1 N} \sum_{s=0}^{\alpha_1 N-1} & \left| G_{\alpha_1}(p+sL) \right|^2 = 1, \label{eq:ort_proof_isi_alpha_1_1}\\
  \forall p  \in & \{0, \dots,L-1\}. \notag
\end{align}
Eq. \eqref{eq:ort_proof_isi_alpha_1_1} can be rewritten as
\begin{align}
  \frac{1}{\alpha_1 N} \sum_{s=0}^{N-1} \left| G_{\alpha_1}(p+sL) \right|^2 & + \frac{1}{\alpha_1 N}  \sum_{s=N}^{\alpha_1 N-1} \left| G_{\alpha_1}(p+sL) \right|^2 \label{eq:ort_proof_isi_alpha_1_2} \\
 =  \frac{1}{\alpha_1 N} \sum_{s=0}^{N-1} \left| G_{\alpha_1}(p+sL) \right|^2 & = \frac{1}{N} \sum_{s=0}^{N-1} \left| G(p+sL) \right|^2 = 1 \label{eq:ort_proof_isi_alpha_1_3} \\ 
     \forall p  \in & \{0, \dots,L-1\}. \notag
\end{align}
In \eqref{eq:ort_proof_isi_alpha_1_2}, the second summation term is null, while \eqref{eq:ort_proof_isi_alpha_1_3} shows that the no-ISI condition is fulfilled. 

\section{Proof of Theorem \ref{thm:params_2}}
\label{Appx:params_2}
To prove the orthogonality, we start from the no-ICI condition in \eqref{eq:ort_FD_3}, rewritten as
\begin{align}
 \frac{1}{\alpha_2 N}  \sum_{s=0}^{\alpha_2 N-1} G_{\alpha_2}\left(p+s\frac{L}{\alpha_2}\right)& G_{\alpha_2}^*\left(p+s\frac{L}{\alpha_2}+k\frac{Q}{\alpha_2}\right) \label{eq:ort_proof_ici_alpha_2_1}\\
 =  \frac{1}{N} \sum_{s=0}^{\alpha_2 N-1} G(\alpha_2 p+sL)& G^*(\alpha_2 p+sL+kQ) = 0, \label{eq:ort_proof_ici_alpha_2_2}\\
      \forall p \in \left\{0, \dots,\frac{L}{\alpha_2}-1 \right\},& \quad \forall k \in \{1,\dots,\alpha_2 K-1\}. \notag
\end{align}
From \eqref{eq:ort_proof_ici_alpha_2_2}, the term $G^*(\alpha_2 p+sL+kQ)$ is null $\forall k \in \{1,\dots,\alpha_2 K-1\}$ due to the frequency confinement. Thus, the criterion is satisfied.

We now focus on the no-ISI condition in \eqref{eq:ort_FD_1}, rewritten as
\begin{align}
  \frac{1}{\alpha_2 N} \sum_{s=0}^{\alpha_2 N-1} & \left| G_{\alpha_2}\left(p+s \frac{L}{\alpha_2}\right) \right|^2, \label{eq:ort_proof_isi_alpha_2}\\
  \forall p  \in & \left\{0, \dots,\frac{L}{\alpha_2}-1 \right\}. \notag
\end{align}
Eq. \eqref{eq:ort_proof_isi_alpha_2} can be rewritten as
\begin{align}
  \frac{1}{\alpha_2 N} \sum_{s_1=0}^{\alpha_2-1}\sum_{s=s_1 N}^{(s_1+1) N-1} & \left| G_{\alpha_2}\left(p+s \frac{L}{\alpha_2}\right) \right|^2\\
  = \frac{1}{N} \sum_{s=0}^{N-1} & \left| G\left(\alpha_2 p+sL\right) \right|^2=1\\
  \forall p  \in & \left\{0, \dots,\frac{L}{\alpha_2}-1 \right\}. \notag
\end{align}
which proves the no-ISI condition.

%\bibliographystyle{IEEEtran}
% argument is your BibTeX string definitions and bibliography database(s)
%\bibliography{IEEEabrv,./biblio/bibliografia}
% Generated by IEEEtran.bst, version: 1.14 (2015/08/26)

\end{document}